
\documentclass[11pt]{article}
\usepackage{amssymb,amsmath,latexsym,amscd,amsfonts, bbm}  
\usepackage{graphicx}  
\usepackage[usenames]{color}
\usepackage{umoline}


 \setlength{\topmargin}{-0.5in}  
 \setlength{\textwidth}{6.5in} 
 \setlength{\textheight}{9.4in}  
 \setlength{\evensidemargin}{-.1in}  
 \setlength{\oddsidemargin}{-.1in}

\newtheorem{theorem}{Theorem}[section]  
\newtheorem{remark}{Remark}[theorem]  
  
\newtheorem{definition}[theorem]{Definition}  
\newtheorem{deff}[theorem]{Definition}

\newtheorem{lemma}[theorem]{Lemma}

\newtheorem{corollary}[theorem]{Corollary}  
\newtheorem{corol}[theorem]{Corollary}

\newcommand{\qedsymb}{\hfill{\rule{2mm}{2mm}}}  
\newenvironment{proof}[1][]{\begin{trivlist}  
\item[\hspace{\labelsep}{\bf\noindent Proof#1:\/}] 
}{\qedsymb\end{trivlist}}

\newcommand{\znote}[1]{}
\newcommand{\inote}[1]{}
\newcommand{\ignore}[1]{}

\newcommand{\BQP}{\textsf{BQP}}
\newcommand{\BPP}{\textsf{BPP}}
\newcommand{\QMA}{\textsf{QMA}}
\newcommand{\QCMA}{\textsf{QCMA}}
\newcommand{\NP}{\textsf{NP}}
\newcommand{\SharpP}{\textsf{\#P}}

\newcommand{\B}[1]{\mathbf{#1}}

\newcommand{\norm}[1]{{\| #1 \|}}  
  
\newcommand{\ket}[1]{{ |{#1} \rangle }}  
\newcommand{\bra}[1]{{ \langle {#1} | }}

\newcommand{\poly}{\mathrm{poly}} 
\newcommand{\EqDef}{\stackrel{\mathrm{def}}{=}}

\newcommand{\Eq}[1]{Eq.~(\ref{#1})}
\newcommand{\Fig}[1]{Fig.~\ref{#1}}
\newcommand{\Def}[1]{Def.~\ref{#1}}

\newcommand{\Sec}[1]{Sec.~\ref{#1}}
\newcommand{\Ref}[1]{\cite{#1}}
\newcommand{\Col}[1]{Corollary~\ref{#1}}
\newcommand{\Thm}[1]{Theorem~\ref{#1}}
\newcommand{\App}[1]{Appendix~\ref{#1}}

\newcommand{\Id}{\mathbbm{1}}
\newcommand{\BBC}{\mathbbm{C}}
\newcommand{\BBR}{\mathbbm{R}}
\newcommand{\BBB}{\mathbbm{B}}
\newcommand{\CM}{\mathcal{M}}
\newcommand{\CO}{\mathcal{O}}

  
\renewcommand{\epsilon}{\varepsilon}

\begin{document}

\title{Quantum computation and the evaluation of tensor networks.} 
\author{  
 Itai Arad
 \thanks{Department of Electrical Engineering and Computer Sciences, 
   University of California at Berkeley, CA.  \newline E-mail:
   \texttt{arad.itai@gmail.com}}  
 \\and  
 Zeph Landau \thanks{ Department of Electrical Engineering and Computer Sciences,
 University of California at Berkeley,CA. } } \maketitle 
 
\noindent 
\begin{abstract} 
  
  We present a quantum algorithm that additively approximates the
  value of a tensor network to a certain scale. When combined with
  existing results, this provides a complete problem for quantum
  computation.  The result is a simple new way of looking at quantum
  computation in which unitary gates are replaced by tensors and
  time is replaced by the order in which the tensor-network is
  ``swallowed''.  We use this result to derive new quantum
  algorithms that approximate the partition function of a variety of
  classical statistical mechanics models, including the Potts model.  
  
\end{abstract}

\section{Introduction}

The discovery by Peter Shor in 1994 of a quantum algorithm for
factoring $n$ digit numbers in $poly(n)$ steps stimulated a large
amount of interest in the power of quantum computation
\cite{ref:Sho97}. Since then, the search for quantum algorithms that
provide exponential speedup over the best known classical algorithms
has yielded a number of results: algorithms for a number of group
and number theoretic problems that, like Shor's algorithm, use the
quantum Fourier transform as the essential ingredient (e.g.
\cite{ref:Wat01, ref:Kup03, ref:vDhL03, ref:Hal07}), an algorithm
for an oracle graph problem that uses the notion of a quantum random
walk \cite{ref:CCDFGS03}, and recently, algorithms for approximating
combinatorial and topological quantities such as the Jones
Polynomial and the Tutte polynomial \cite{ref:Fre02b, ref:Fre02c,
ref:Aha06b,ref:Aha07}.  These last algorithms related to the Jones
and Tutte Polynomial are fundamentally different from the previous
algorithms. The work presented here began as a study of the core
features of these algorithms.

This work presents a simple new way of looking at quantum
computation.  The consequences are a) new quantum algorithms, b) the
casting of the aforementioned Jones Polynomial and Tutte polynomial
results in a new light, and c) a new geometric view of quantum
computation that will hopefully lead to more new algorithms.
 
The fundamental object for this new view is a {\em tensor network}
which we now briefly describe (a precise description of tensor
networks is given in Section \ref{sec:tnet}).  A tensor network
$T(G, \CM)$, is a graph $G$, a finite set of colors that can be used
to label the edges of $G$, along with a finite array of data $M_v
\in \CM$ assigned to each vertex $v\in V$ of the graph.  This finite
set of data $M_v$ is of the following form: for each possible
coloring of the edges incident to the vertex $v$, the vertex is
assigned a complex number.  Thus for a given coloring $l$ of all
the edges of the graph, each vertex has an assigned value -- we
denote the product of these values by $c_l$. The value of a tensor
network $T(G,\CM)$ is defined to be the sum of $c_l$ over all
possible labelings $l$ of $G$.  
 

The notion of a tensor network geometrically captures many
fundamental linear algebra concepts such as inner product, matrix
multiplication, the trace of a matrix, composition of linear maps,
and the dual space, to mention a few.  Loosely speaking, it is these
features that fundamentally link tensor networks both to quantum
computation and to the various combinatorial and topological objects
discussed in this paper. 

Here, we give a quantum algorithm that takes as input a tensor
network and gives as output an additive approximation of the value
of the tensor network to a certain scale. Together with previous
results \cite{ref:Shi06, ref:Aha06c}, this provides a complete
problem for quantum computation.  We then apply this result to two
classes of problems.

First we give new quantum algorithms for approximating an important
quantity associated to a host of statistical mechanical models.
Statistical mechanical models attempt to model macroscopic behavior
of physical systems made from a very large number of microscopic
systems that interact with each other.  In this paper, we consider a
broad class of models, called q-state models, an example of which is
the well known Potts model.  These models are described by a graph
where the vertices are thought to be in one of q possible states.
For a given assignment of states to the vertices, the energy of the
model is given by a sum of local energy contributions of each edge,
where the local energy is some function of the states of the two
endpoints of the edge.  The partition function of the model is the
sum over all possible assignment of states of the vertices, of a
particular exponential function of the energy of the model for that
assignment (see section \ref{sec:stat} for the details).  It turns
out that many interesting macroscopic properties of the system can
be deduced solely from the partition function. These include the
average energy of the system, its entropy, specific heat, and more
elaborate properties such as phase-transitions \cite{ref:Cal85}. The
calculation of the partition function is therefore an important task
in the theory of statistical physics.  Here we apply the main result
to give an additive approximation of the partition function of any
q-state statistical mechanical model.

Second, we show how a specific application of the main result is
used as an essential step in the recent quantum algorithms for
additively approximating the Jones and Tutte polynomials.

The fact that the approximations are additive and depend on the
approximation scale is by no means a minor point: if the
approximation scale is very large, the algorithm will produce
estimates that are trivial, or at least no better than a classical
approximation.  

So what can be said about the approximation scale in the algorithms
presented in this paper?  In general, of course, our result shows
that there are plenty of problems (i.e., tensor networks) for which
the level of approximation is non-trivial (assuming the power of
quantum computation exceeds that of classical computation.) With
respect to statistical mechanical models, a recent result contained
in \Ref{ref:Nes08} shows that the approximation scale of the
algorithms presented here for certain non-physical statistical
mechanical models is small enough to solve a \BQP -complete problem.
Whether the approximation scale is non-trivial for the statistical
mechanics models with physical parameters remains unknown, though
for a certain range of parameters the scale is superior to any
classical algorithm known to the authors.

Neither tensor networks nor additive approximations are new to the
study of quantum computation.  The fact that highly entangled
quantum states, as well as quantum operations, can be efficiently
represented by tensor networks has been the backbone of many studies
that simulate quantum systems. (see, for example,
Refs.~\cite{ref:Vid03a, ref:Vid04a, ref:Ver04a, ref:Mar05,
ref:Shi06, ref:Nes07a, ref:Vid07a, ref:Hub08}).

Separately, as mentioned, additive approximations and quantum
computation have been linked with the recent results that give
quantum algorithms for additively approximating the Jones polynomial
of braids and the Tutte polynomial of planar graphs, as well as the
complementary results that show that for certain parameters, these
approximations are complete quantum problems \cite{ref:Fre02b,
ref:Fre02c, ref:Aha06b,ref:Aha07}.  Motivated by the Jones
polynomial result, the computational complexity of additive
approximations has been further investigated in \cite{ref:Bor05}.

The view of quantum computation as an additive approximation of
tensor networks provides a useful unifying lens through which to
view these two classes of results.  The above mentioned results
related to classical simulation can be seen as showing that certain
restrictions on the form of a tensor network allow for classical
evaluation.  The algorithmic results for the Jones and Tutte
polynomial, as well as the statistical mechanical algorithms
presented here, can be seen as the quantum approximation of specific
tensor networks whose value is a quantity of interest.

An interesting consequence of the main result presented here is that
two core features of quantum circuits: the unitarity of the gates,
and the notion of time (i.e. that the gates have to be applied in a
particular sequence) are replaced by more flexible features.  The
unitary gate is replaced by an arbitrary linear map encoded in each
tensor.  The notion of time and sequential order of a circuit is
replaced by the geometry of the underlying graph of the tensor
network along with a choice of ``bubbling" of the network, a concept
that is explained in \Sec{sec:algorithm}.  Unlike a quantum circuit,
which is ordered in a unique way, a given tensor-network has many
possible bubblings.  

An outline of this paper is as follows. We begin by giving precise
definitions of tensor networks in \Sec{sec:tnet}, and then in
\Sec{sec:algorithm} we prove the central structural theorem: that
the approximation of tensor networks with the scale prescribed is a
problem a quantum computer can perform efficiently (Theorem
\ref{thm:main}). Section~\ref{sec:complete} then shows that this
approximation problem is a complete problem for quantum computation.
In addition, \Sec{sec:complete} contains a discussion of the
approximation error of this result when applied to particular
families of tensor networks.  We then present quantum algorithms for
approximating the partition function of the statistical mechanics
models in \Sec{sec:stat} (which include the Ising, clock, and Potts
model). \Sec{sec:top} contains a brief discussion of tensor networks
related to some topological invariants. We offer a summary and
discussion in \Sec{sec:summary}.

\section{Preliminaries}
\label{sec:tnet}

\subsection{Notation} 

We fix a $q$ dimensional Hilbert space $H=\BBC^q$ and an orthonormal
basis for $H$ which we denote by $\{\ket{0}, \ket{1}, \dots
\ket{q-1}\}$. For a finite set $S$, $|S|$ will denote the number of
elements of $S$.  A graph $G$ will be denoted by a pair $(V,E)$,
where $V$ is the set of vertices and $E$ is the set of edges.

For any linear operator $A$ over some Hilbert space $H$,  the
norm of $A$ shall mean the operator norm of $A$ and be denoted
by $\norm{A}$. The term $poly(t)$ shall be used to denote some
unspecified polynomial function of $t$.

\subsection{Tensor Networks} 

Tensors are mathematical objects that appear in many branches of 
mathematics and physics.  They can be defined in many ways; for our
purposes, we define them as follows:
\begin{deff}[A Tensor]
\label{def:tensor}

  A tensor $\B{M}$ of rank $k$ and dimension $q$ is an array of
  $q^k$ numbers that are denoted by $M_{i_1, \dots , i_k}$, with
  $i_s$, $1\leq s \leq k$ being indices that take on the values
  $0\leq i_s \leq q-1$. 
\end{deff}
In the rest of the paper, we will always assume that all of the
tensors we deal with are of complex numbers and are tensors of a
fixed dimension $q$.
  
We now describe a couple of useful operations on tensors.  Given a
rank-$k$ tensor $\B{A}$ and a rank-$\ell$ tensor $\B{B}$, the
\emph{product} $\B{A} \otimes \B{B}$ shall be the rank $(k+\ell)$
tensor that is just the tensor product of the two tensors: 
\begin{align*}
  (\B{A}\otimes \B{B})_{i_1 \dots i_k, j_1 \dots j_\ell} 
    \EqDef A_{i_1 \dots i_k}B_{j_1 \dots j_\ell} \ .
\end{align*}

For a rank-$k$ tensor $\B{A}$, and two indices $\ell,m$, $1\leq \ell < m
\leq k$, the \emph{contraction} of $\B{M}$ \emph{with respect to} $\ell$
and $m$ shall be the rank $k-2$ tensor $\B{C}$ given by the
following equation:
\begin{align*}
  \underbrace{C_{i_1 \ldots i_k}}_{\text{no $i_\ell$ and $i_m$}}
    \EqDef \sum_{s=0}^{q-1} 
       A_{i_1 \dots i_{\ell-1},s,i_{\ell+1}
       \dots i_{m-1}, s, i_{m+1}  \dots i_k} \ .
\end{align*}
Combining these two operations together, we can talk about the
contraction of two tensors, which is the result of taking their
product and then contracting the resulting tensor. For example, 
the tensor $C_{i,j} = \sum_{k,\ell} A_{i,k,\ell}B_{j, k, \ell}$ is
the contraction of the two rank-3 tensors $A_{i,k,\ell}$ and $B_{j,
k, \ell}$ along the $k,\ell$ indices. Notice also that
\begin{remark}
  Contraction can be thought of as a generalization of the notions
  of inner product and matrix multiplication. The contraction of two
  rank-1 tensors can be seen as an inner product between two
  vectors.  The matrix product formula $C_{i,j}=\sum_k
  A_{i,k}B_{k,j}$ can be viewed as the contraction of the product of
  two rank-$2$ tensors $\B{A}$ and $\B{B}$ with respect to the
  second index of $\B{A}$ and the first index of $\B{B}$.
\end{remark}

In general, we will be interested in the contraction of the product
of many tensors over multiple indices.  It is an important
observation (that can be easily checked) that the order of products
and contractions does not matter as long as we keep proper track of
the appropriate indices.

This observation leads us to the central object of this paper, the
\emph{tensor network}. This is an extremely useful graphical picture
of tensors, products, and contractions that we now describe. A
rank-$k$ tensor $\B{A}$ shall be represented as a vertex with $k$
edges incident to it -- each edge shall correspond to one index of
$\B{A}$.  The product of two tensors will be represented as the
disjoint union of two such pictures and the contraction of a tensor
along the  $\ell$ and $m$ indices shall be represented by
joining the edge corresponding to the $\ell$'th index with the edge
corresponding to the $m$'th.  With this description, a series of
products and contractions of tensors becomes a graph with labeled
vertices and a certain number of free edges.  The number of free
edges is exactly the rank of the tensor that results from the
products and contractions. Examples of such diagrams are given in 
\Fig{fig:tensors}.
\begin{figure}
  \center \includegraphics[scale=0.7]{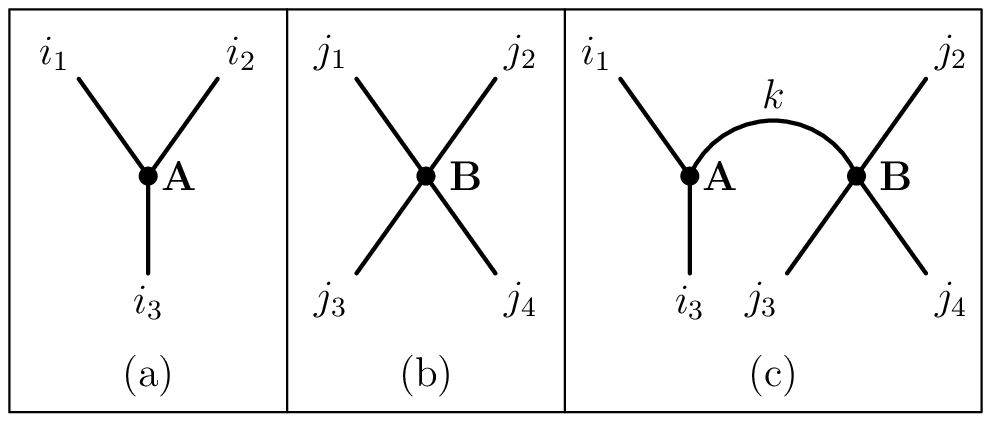} 
  \caption{A graphical representation of tensors: Fig.~(a)
  denotes a rank-3 tensors $A_{i_1, i_2, i_3}$, Fig.~(b) denotes the
  rank-4 tensor $B_{j_1, j_2, j_3, j_4}$, and Fig.~(c) denotes their
  contraction $\sum_k A_{i_1, k, i_3}B_{k, j_2, j_3, j_4}$.
  In the following, we will usually omit the labeling of tensors'
  indices.
  \label{fig:tensors}}
\end{figure}

We shall be particularly interested in cases where all indices are
contracted to yield a single number, or, equivalently, when the
associated graph has no free edges:
\begin{deff}[Tensor network]
  A tensor network is a product of tensors that are contracted
  together such that no free indices are left\footnote{We note that
  in other works in quantum information, it is conventional to
  include also graphs with free edges in the definition of tensor
  networks. Nevertheless, here we narrow the definition to fully
  contracted graphs.} It is denoted by $T(G,\CM)$, with $G=(V,E)$
  being a graph and $\CM=\{\B{M}_v | v\in V\}$ a set of tensors.
  For each $v\in V$, the rank of the tensor $\B{M}_v$ is equal to
  the degree of $v$, with every index of $\B{M}_v$ being associated
  with an adjacent edge of $v$. Finally, each edge denotes a
  contraction of the two indices that correspond to its ends.  The
  \emph{value} of the tensor network is the number that results from
  the series of products and contractions described by the network.
  When the context is clear we shall use $T(G,\CM)$ to denote the
  value of the network as well as the network itself. 
\end{deff}
With these definitions, a tensor network is nicely described
pictorially by a graph, as demonstrated in \Fig{fig:simple-tnet}. 

\begin{figure}
  \center \includegraphics[scale=1]{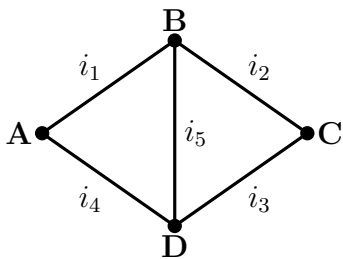} 
  \caption{An example of a simple tensor network
  that is given by $T(G,M) = \sum_{i_1, \ldots, i_5} A_{i_1, i_4}
    B_{i_1, i_5, i_2} C_{i_2, i_3} D_{i_3, i_5, i_4}$.
  \label{fig:simple-tnet}   
  }
\end{figure}

The definition of a tensor network motivates a different notation
that will be especially helpful when studying statistical models.  Given
a tensor network $T(G, \CM)$, we define an \emph{edge labeling $l$}
to be an assignment of an integer $0,1, \ldots, q-1$ to each edge of
$G$. The network tensors can be viewed as \emph{functions} of these
labelings: $M_{v}(l)\EqDef (M_v)_{i_1, \dots , i_k}$ where the value
of the indices $(i_1, i_2, \dots,i_k)$ are defined by the labeling
$l$. With this notation, the value of a tensor network can be neatly
written as a sum over all possible labeling of the edges:
\begin{align}
  \label{eq:alternative}
  T(G,\CM) = \sum_{\text{labeling $l$\ \ }}\prod_{v\in V} M_v(l) \ .
\end{align}

\subsection{Tensors as quantum states and quantum operators}
\label{sec:map}

There is an extremely useful relation between tensors and quantum
states and quantum operators. Given a Hilbert space $H^{\otimes k}$
(recall that $H=\BBC^q$) with a fixed basis $\{\ket{i_1}\otimes
\ket{i_2} \dots \otimes \ket{i_k}\}$, there is a 1-1 mapping between
rank-$k$ tensors of dimension $q$ and vectors in the Hilbert space,
given by:
\begin{align*}
  \B{M}\mapsto \ket{M} \EqDef \sum_{i_1, \dots, i_k=1}^q 
    M_{i_1\dots i_k} \ket{i_1}\otimes \ket{i_2}\otimes
  \dots \otimes \ket{i_k} \ .
\end{align*}

In addition, we can also identify tensors with \emph{linear maps}
from one Hilbert space to another. Given a rank-$n$ tensor $\B{M}$,
we partition the indices of $\B{M}$ into two sets $K$ and $L$.  Set
$k=|K|$, $\ell=|L|$.  Then define $\B{M}^{K,L}:
H^{\otimes k} \rightarrow H^{\otimes \ell}$ to be the map: 
\begin{align}
  \B{M}^{K,L} = \sum_{i_1\dots i_k, j_1 \dots j_\ell} 
    M_{i_1\dots i_\ell, j_1 \dots j_k} 
    \ket{i_1}\otimes\dots \otimes\ket{i_\ell}
    \bra{j_1}\otimes \dots \otimes \bra{j_k},
\end{align}
where in the above sum the $i$ variables range over all possible
values of the indices in $L$ and the $j$ variables range over all
possible values of the indices in $K$.  We further note that
although we wrote $M_{i_1\dots i_\ell, j_1 \dots j_k}$, this will
only be correct if the set $L$ consisted of the first $\ell$ indices
of $\B{M}$ and $L$ the last $k$ indices.  In general the $i$ and $j$
indices will be shuffled around to correspond to the locations of
$K$ and $L$. Alternatively, we can define $\B{M}^{K,L}$ inside the
pictorial world of tensor-networks: it is the linear map that takes
rank-$k$ tensors to rank-$\ell$ tensors. Given a rank-$k$ tensor
$\B{A}$, contract the $k$ indices of $\B{A}$ with the indices $K$ of
$\B{M}$, and the result is a rank-$\ell$ tensor with the indices
$L$. This is demonstrated in \Fig{fig:contraction}.
\begin{figure}
  \center \includegraphics[scale=1.2]{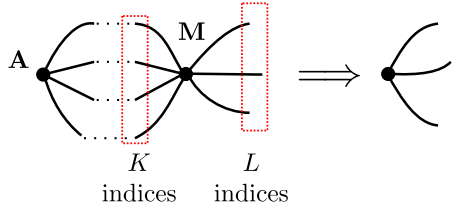} 
  \caption{A graphical illustration of the action of the swallowing
  operator $\B{M}^{K,L}$. It takes a $k$-rank tensor into a
  $\ell$-rank tensor by contracting its $k$ indices with its $K$
  indices. The resultant contraction has $\ell$ indices -- the
  indices that come from the $L$ indices of $\B{M}^{K,L}$.
  \label{fig:contraction}}
\end{figure}

\subsection{Additive approximations} 
\label{sec:add}

The main result of this paper shows that every (finite) tensor
network admits an efficient quantum additive approximation. In this
section we define this type of approximation.

Roughly speaking, an additive approximation algorithm for a quantity
$X$ provides an approximation within the range $[X-\Delta/poly(n),
X+\Delta/poly(n)]$ with $\Delta$ being the approximation
scale and $n$ the running time of the algorithm . The
approximation allows errors up to $\Delta/poly(n)$, whereas a
\emph{multiplicative} approximation only
allows errors up to $|X|/poly(n)$. Since $\Delta$ can be arbitrarily
larger than $|X|$, we have a weaker notion of approximation.
Nevertheless, it appears most suitable in describing the performance
of many quantum algorithms, in particular those which deal with
topological invariants such as the Jones polynomial
\cite{ref:Fre02a, ref:Fre02b, ref:Fre02c, ref:Bor05, ref:Aha06b}. 

In \Ref{ref:Bor05} this type of approximation and its relation to
quantum computation were studied.  We therefore adopt their
definition of this approximation with some minor adjustments.
\begin{deff}[Additive approximation] \label{def:adaprox}
  A function $f:\{0,1\}^*\to \BBC$ has an 
  additive approximation with an approximation scale $\Delta:\{0,1\}^*\to
  \BBR_+$ if there exists a probabilistic
  algorithm that given any instance
  $x\in \{0,1\}^*$ and $\epsilon>0$, produces a complex number $V(x)$ such that
  \begin{align}
    \Pr\Big( |V(x) - f(x)| \ge \epsilon\Delta(x)  \Big) \le \frac{1}{4} \ , 
  \end{align}
  in a running time that is polynomial in $|x|$ and $\epsilon^{-1}$.
\end{deff}
Notice that we did not specify the type of approximation algorithm;
it can be either classical or, as in this paper, quantum.  Note also
that the $1/4$ parameter in the definition can be replaced by
constant $\delta\in (0,1/2)$, since we could reduce this error
probability in polynomial time by taking several runs of the
algorithm. Finally, notice that by setting $\Delta(x)\EqDef |f(x)|$,
we recover the definition of an FPRAS (Fully Polynomial Randomized
Approximation Scheme).

As noted in the introduction, additive approximation can be trivial
if the approximation scale $\Delta$ is too large. In the quantum
case, the approximation scale might be non-trivial, yet it might be
classically reproducible. We discuss this problem in
\Sec{sec:complete}.


\section{Approximating a tensor network with a quantum computer}
\label{sec:algorithm}

\subsection{Outline of the algorithm}

We now describe the quantum algorithm that gives an additive
approximation of a tensor network. We begin with an informal
pictorial description of the algorithm. Given a tensor network, we
imagine its graph embedded in $\BBR^3$, and a large bubble that
approaches the graph and starts ``swallowing'' it one vertex at a
time, as illustrated in \Fig{fig:bubbeling-1}. Every time it
swallows a vertex, it also swallows some of its adjacent edges,
while the rest of the adjacent edges are only ``half swallowed'';
they intersect with the boundary of the bubble. Thus between
swallows, what remains are the vertices that we have not yet
swallowed, the edges between those vertices, as well as half-edges
that join swallowed vertices with un-swallowed ones.  In the end,
once all vertices have been swallowed, we are left with a rank-zero
tensor that is simply a number -- the value of the tensor-network.  

Our algorithm will mimic this swallowing step by step by creating a
state related to the tensor of the swallowed part of the graph. The
act of swallowing a new vertex will be mirrored by an application of
a ``swallowing" operator.  The swallowing of a graph can be
therefore viewed as a generalization of a standard quantum
computation, with the vertices being the gates, and the bubbling
determines the order in which these gates are applied. 

There is one obstacle, however, that one has to cross: the
swallowing operators are linear, but not necessarily unitary. In
order to implement them on a quantum computer, we will use a
standard trick: we will simulate the non-unitary operator as a
sub-unitary operator. This is done by adding an ancilla qubit to the
system, followed by a global unitary operation, and then a
projection on a specific subspace of the ancilla. The resultant
state will be a scaled version of the vector we would have obtained
by the non-unitary transformation, with the scaling factor being the
norm of the non-unitary operator. As we shall see, this will result
in an approximation scale that is the product of the norms of the
swallowing operators. 

\begin{figure}
  \center \includegraphics[scale=0.7]{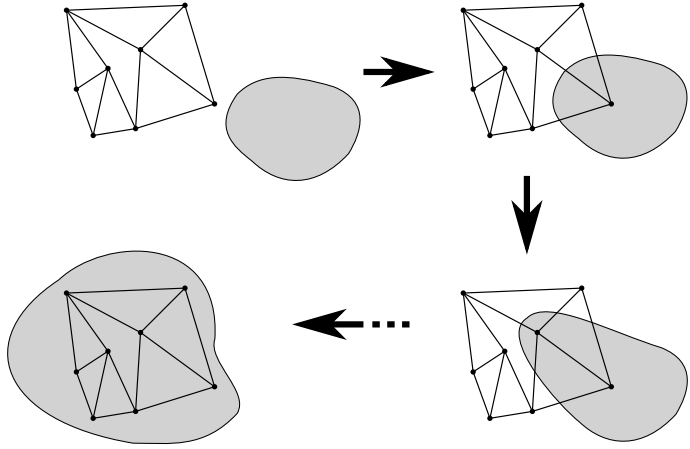} 
  \caption{An illustration of a bubbling of a tensor network with 8
  vertices. In the beginning, the bubble is away from the graph, but
  then it starts swallowing its vertices one by one. The process
  ends when all the vertices are inside the bubble.
  \label{fig:bubbeling-1}   
  }
\end{figure}

\subsection{Bubbling and swallowing operators}
\label{sec:bubbling}

We begin with formal definitions of a bubbling and the resultant
swallowing operators discussed above.
\begin{deff}[Bubbling of a graph]
  A bubbling $B$ of a graph $G=(V,E)$ shall mean an ordering of all
  the vertices of $G$, 
  \begin{align}
    v_1, v_2, v_3, \ldots  \ .  
  \end{align}
  This ordering induces a sequence of  subsets 
  \begin{align}
    \emptyset=S_0\subset S_1\subset S_2 \subset\dots \subset S_{|V|}=V ,
  \end{align}
  with $S_i=\{v_1,\ldots, v_i\}$.  For each $i$, we define
  $Z_i\subset E$ to be the set of edges with exactly one endpoint in
  $S_i$. A graphical illustration of this process is shown in
  \Fig{fig:bubbeling-1} and \Fig{fig:bubbeling2}.
\end{deff} 

\begin{figure}
  \center \includegraphics[scale=1]{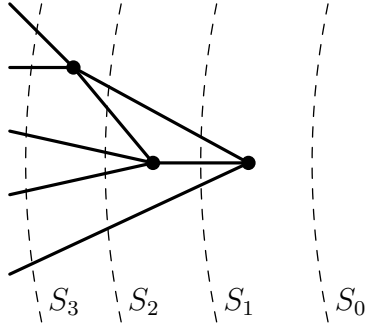} 
  \caption{The first 4 stages of a bubbling of a graph.
  \label{fig:bubbeling2}}
\end{figure}

Given a tensor network $T(G, \CM)$, a bubbling $B$ of $G$ also
defines a sequence $\B{A}_i$, $0\leq i \leq n$ of $n+1$ tensors as
follows. For every $i$, cut the tensor network at the edges in
$Z_i$; this divides the network into two pieces, one piece contains
all the vertices of $S_i$, the other contains the remaining
vertices. Define $\B{A}_i$ to be the rank $|Z_i|$ tensor represented
by the first piece of the dissected graph (this corresponds to the
tensor of the swallowed part of the graph in our informal
description). For the special $i=0$ case, $\B{A}_0$ has no indices;
it is a single number, which we define to be $1$.  The last tensor,
$\B{A}_n$, is also a zero-rank tensor: it corresponds to the
contraction of the entire network, hence its value is $T(G,\CM)$ --
the number we are trying to approximate.

The relationship between $\B{A}_i$ and $\B{A}_{i+1}$ is clear:
$\B{A}_{i+1}$ is obtained from $\B{A}_i$ by contracting it with
$\B{M}_{v_i}$ over the indices in $Z_i$. This action is familiar, it
is just the application of the map $\B{M}_{v_i}^{I,J}$ with $I=Z_i$
contracted with the corresponding edges of $\B{A}_{i}$ in $Z_i$.
This leads us to the following definition:

\begin{deff}[The swallowing operator]
\label{def:swallow-op}
  Let $T(G,\CM)$ be a tensor network with a bubbling $B=(v_1,v_2,
  \dots v_n)$.  For every integer $i \in \{1,\dots n\}$ define:
  \begin{itemize}
    \item $K$ to be the set of \emph{input edges} -- all edges in 
          $Z_{i-1}$ that are connected to $v_i$. These edges
          connect $v_i$ to $S_{i-1}$.
          
    \item $L$ to be the set of \emph{output edges} -- all edges in 
      $Z_i$ that are connected to $v_i$. These edges connect
      $v_i$ to $V\setminus S_i$.

    \item $J$ to be the set of \emph{untouched edges} -- edges in 
          $Z_{i-1}$ that are not adjacent to $v_i$ (these edges
          must also be in $Z_{i}$).
  \end{itemize}
   
  We define the \emph{swallowing operator} $\CO_{v_i}$ to be a
  linear operator that takes states from $H^{\otimes|Z_{i-1}|}$ to
  $H^{\otimes|Z_i|}$ by
  \begin{align}
  \label{def:Ov}
    \CO_{v_i} = \Id_{J}\otimes \B{M}_{v_i}^{K,L} \ ,
  \end{align}
  where by $\Id_{J}$ we mean the identity operator on the indices
  corresponding to the untouched edges of $J$.
\end{deff}
With this definition, it is clear that 
\begin{align} 
\label{eq:swallow}
  \ket{A_{i+1}}= \CO_{v_i} \ket{A_i} \ , 
\end{align}
and we are ready to state the formal approximation algorithm in the
next section.

\subsection{Implementation on a quantum computer}

As mentioned in the beginning of the section, the main technical
difficulty in implementing the algorithm on a quantum computer is
the fact that the swallowing operators might be non-unitary. The
following lemma serves as the central building block of
the algorithm. It shows the well-known result that any such operator
can be implemented on a quantum computer using ancilla qubits,
unitary operators and a final projection.
\begin{lemma} 
\label{lem:tounitary} 
  Given a linear map $A:H^{\otimes k} \rightarrow H^{\otimes k}$,
  let $\mathbbm{B}$ denote the space corresponding to a qubit with
  computational basis $\{\ket{0}, \ket{1} \}$.  Then there exists a
  unitary operator $U:H^{\otimes k} \otimes \mathbbm{B}
  \rightarrow H^{\otimes k} \otimes \mathbbm{B} $ such that
  \begin{align} 
    \label{eq:U-action}
    U\big( \ket{\alpha} \otimes \ket{0}\big) 
      = \frac{1}{\norm{A}} \big(A\ket{\alpha}\big)\otimes \ket{0} 
        + \ket{\beta_2}\otimes \ket{1} \ . 
  \end{align}
  Furthermore, $U$ can be implemented on a quantum computer in time
  $\poly(q^k)$ with exponential accuracy. (where $q$ is the
  dimension of $H$).
\end{lemma}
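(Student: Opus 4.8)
The plan is to treat this as the standard unitary dilation (Halmos dilation) of a contraction, followed by an appeal to gate universality for the implementation claim.

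First I would reduce to the contraction case. Assuming $A\neq 0$ (the case $A=0$ is vacuous, as the factor $1/\norm{A}$ presupposes $A\neq 0$), set $\tilde A\EqDef A/\norm{A}$, so that $\norm{\tilde A}=1$ and hence $\tilde A^\dagger\tilde A\leq\Id$. Consequently the positive operator $\Id-\tilde A^\dagger\tilde A$ is well defined, and I may form its unique positive square root $D\EqDef\sqrt{\Id-\tilde A^\dagger\tilde A}$, and likewise $D'\EqDef\sqrt{\Id-\tilde A\tilde A^\dagger}$, via the functional calculus. With these in hand, the target identity $U(\ket{\alpha}\otimes\ket{0})=\tilde A\ket{\alpha}\otimes\ket{0}+\ket{\beta_2}\otimes\ket{1}$ is exactly what a dilation of $\tilde A$ produces, with $\ket{\beta_2}=D\ket{\alpha}$.

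Next I would exhibit $U$ explicitly. Writing operators on $H^{\otimes k}\otimes\mathbbm{B}$ as $2\times 2$ blocks indexed by the ancilla basis $\{\ket{0},\ket{1}\}$, define
\begin{align}
  U\EqDef\begin{pmatrix}\tilde A & D' \\ D & -\tilde A^\dagger\end{pmatrix}.
\end{align}
I would verify $U^\dagger U=UU^\dagger=\Id$ by a direct block computation: the diagonal blocks of $U^\dagger U$ give $\tilde A^\dagger\tilde A+D^2=\Id$ and $(D')^2+\tilde A\tilde A^\dagger=\Id$, while the off-diagonal blocks vanish by the intertwining identity $\tilde A D=D'\tilde A$ together with its adjoint $D\tilde A^\dagger=\tilde A^\dagger D'$. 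This identity holds because $\tilde A(\tilde A^\dagger\tilde A)^n=(\tilde A\tilde A^\dagger)^n\tilde A$ for every $n$, so $\tilde A\,p(\tilde A^\dagger\tilde A)=p(\tilde A\tilde A^\dagger)\,\tilde A$ for any polynomial $p$, and the case of the square root follows by continuity. Reading off the first column of $U$ yields precisely \Eq{eq:U-action} with $\ket{\beta_2}=D\ket{\alpha}$, establishing existence.

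Finally, for the implementation claim I would invoke standard gate universality. The operator $U$ acts on a Hilbert space of dimension $2q^k$, so it decomposes into $O(q^{2k})$ two-level unitaries, each realizable as a controlled single-qubit rotation on $O(\log q^k)$ qubits at a cost of $\poly(k\log q)$ elementary gates. Approximating each rotation to accuracy $\delta$ from a fixed universal gate set costs $\poly(\log(1/\delta))$ gates by the Solovay--Kitaev theorem, so approximating $U$ to overall accuracy $\eps$ costs $\poly(q^k)\cdot\poly(\log(1/\eps))$ elementary gates; taking $\eps=2^{-m}$ gives a running time polynomial in $q^k$ and in $m$, which is the asserted exponential accuracy. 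The proof presents no deep obstacle, since the non-unitarity of $A$ --- the conceptual heart of the statement --- is dispatched cleanly by the dilation; the one point demanding care is getting the square-root intertwining $\tilde A D=D'\tilde A$ right so that the off-diagonal blocks cancel, everything else being bookkeeping or a citation.
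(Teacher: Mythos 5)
Your proof is correct, but it takes a genuinely different route from the paper's. The paper first reduces to a diagonal matrix via the singular value decomposition, writing $\frac{1}{\norm{A}}A = V_1 D V_2$ with $D$ diagonal and entries in $[0,1]$; the dilation is then built only for the diagonal $D$, where unitarity of the block operator $\bigl(\begin{smallmatrix} D & \sqrt{1-D^2}\\ -\sqrt{1-D^2} & D\end{smallmatrix}\bigr)$ is an immediate entrywise check, and the full $U$ is recovered as $(V_1\otimes\Id)\,U_D\,(V_2\otimes\Id)$. You instead write down the Halmos dilation of the contraction $\tilde A$ directly, which is cleaner and coordinate-free but shifts the burden onto the intertwining identity $\tilde A\sqrt{\Id-\tilde A^\dagger\tilde A}=\sqrt{\Id-\tilde A\tilde A^\dagger}\,\tilde A$; your polynomial-approximation argument for that identity is correct and is exactly the point where care is needed. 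The trade-off is that the paper's route makes unitarity trivial at the price of invoking the SVD as a preprocessing step, while yours avoids the explicit diagonalization in the construction itself (though you still need to compute $\sqrt{\Id-\tilde A^\dagger\tilde A}$ classically to write down the circuit, which is the same $\poly(q^k)$ classical cost). Both arguments then conclude with the same appeal to two-level decomposition and Solovay--Kitaev, so the stated running time and accuracy are obtained either way.
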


\begin{proof}
  Set $m=q^k$.  Using the fact that every linear operator has a
  singular value decomposition, we can write
  $\frac{1}{\norm{A}}A=V_1DV_2$ where $V_1$ and $V_2$ are unitaries 
  and $D$ is a diagonal matrix with diagonal entries 
  \begin{align}
    1\ge r_1\ge r_2\ge\ldots\ge r_m \ge 0 \ .
  \end{align}
  Now define the map $U_D: H^{\otimes k} \otimes \mathbbm{B}
  \rightarrow H^{\otimes k} \otimes \mathbbm{B}$ as follows: every
  vector in $H^{\otimes k} \otimes \mathbbm{B}$ can be written as
  unique super position 
  \begin{align}
    \ket{\alpha} = \ket{\beta _0} \otimes \ket{0} + \ket{\beta _1}
        \otimes \ket{1} \ .
  \end{align}
  Then action of $U_D$ on $\ket{\alpha}$ is defined by
  \begin{align}
    U_D\ket{\alpha} \EqDef \big(D\ket{\beta_0} +
        \sqrt{1-D^2}\ket{\beta _1}\big)\otimes\ket{0} 
    + ( -\sqrt{1-D^2}\ket{\beta _0} + D \ket{\beta _1} )
  \otimes \ket{1} \ , 
  \end{align}
  where $\sqrt{1-D^2}$ is the diagonal matrix with $i$'th entry
  $\sqrt{1-r_i^2}$.  It is a simple calculation to verify that $U_D$
  is unitary.  Setting $U\EqDef \big(V_1\otimes
  \Id_{\mathbbm{B}}\big) U_D \big(V_2\otimes
  \Id_{\mathbbm{B}}\big)$, \Eq{eq:U-action} follows.

  It remains to show that the operation of computing and applying
  $U$ can be done in quantum $\poly(m)$ time (recall $m=q^k$). The
  computation of the singular value decomposition can be done in
  classical $\poly(m)$ time.  We are left to implement the three
  unitaries $V_1\otimes \Id$, $U_D$ and $V_2\otimes \Id$.  Since
  these operators act on an $2m$ dimensional space they are
  therefore $(\log m)$-local qubit operators.  The simulation of
  unitary operators that act on $\log m$ qubits is a standard
  procedure in quantum computation based on the Solovay-Kitaev theorem and can be done in $\poly(m)$
  quantum time \cite{kityushen,ref:Daw05} and thus the whole process can be
  completed in $\poly(m)$ time.
\end{proof}

With this lemma in hand, we are ready to prove the central result of
this paper:
\begin{theorem}[Additive quantum approximation of a tensor-network]
\label{thm:main} 
  Let $G=(V,E)$ be a graph of maximal degree $d$, and let $T(G,\CM)$
  be a tensor-network of dimension $q$ defined on $G$. For a given
  bubbeling 
  $B=(v_1, v_2, v_3, \ldots)$ of $G$, let $\big\{\CO_{v}\big\}_{v\in
  V}$ the
  corresponding swallowing operators from \Def{def:swallow-op}. Then for
  any error parameter $\epsilon>0$, there exists a quantum algorithm
  that runs in $|V|\cdot\epsilon^{-2}\cdot\poly(q^d)$ quantum time
  and outputs a complex number $r$, such that
  \begin{align}
    \Pr\Big(|T(G,\CM)-r| \ge \epsilon \Delta\Big) \le \frac{1}{4} \ ,
  \end{align}
  with 
  \begin{align}
  \label{eq:approx}
    \Delta \EqDef \prod_{v\in V}\norm{\CO_v} \ .
  \end{align}
\end{theorem}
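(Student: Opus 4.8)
The plan is to rewrite the normalized target $T(G,\CM)/\Delta$ as a single transition amplitude $\bra{\B 0}W\ket{\B 0}$ of a unitary $W$ that can be assembled efficiently, and then to estimate this amplitude by sampling. First I would observe that, by \Eq{eq:swallow}, the value of the network is the scalar obtained by applying the swallowing operators in order to the trivial rank-zero tensor $\ket{A_0}=1$:
\begin{align*}
  T(G,\CM) = \CO_{v_n}\CO_{v_{n-1}}\cdots \CO_{v_1}\cdot 1 \ ,
\end{align*}
since $Z_0=Z_n=\emptyset$ forces both the input and the output space of the whole composition to be the one-dimensional space $H^{\otimes 0}=\BBC$. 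Thus the algorithm reduces to implementing this product of the $\CO_{v_i}$ on a quantum computer and reading off the resulting scalar.

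Next I would dilate each swallowing operator to a unitary. Since $\CO_{v_i}=\Id_J\otimes \B{M}_{v_i}^{K,L}$ we have $\norm{\CO_{v_i}}=\norm{\B{M}_{v_i}^{K,L}}$, and the nontrivial factor touches only the $|K|+|L|\le d$ edges incident to $v_i$. Allocating one qudit register per edge (initialized to $\ket{0}$, so that an edge's register is occupied exactly while that edge crosses the bubble), I would first realize $\B{M}_{v_i}^{K,L}$ as a square operator on these $\le d$ qudits that reads the input-edge registers $K$, writes the output-edge registers $L$, and resets the consumed registers to $\ket{0}$; this padding preserves the operator norm. \Lem{lem:tounitary} (applied with $k\le d$) then dilates it, using one fresh ancilla qubit, to a unitary $U_i$ on a space of dimension $\poly(q^d)$ whose ``ancilla in $\ket{0}$, ancilla out $\ket{0}$'' block equals $\frac{1}{\norm{\CO_{v_i}}}\CO_{v_i}$. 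Letting $W=U_n\cdots U_1$ act on all the edge registers together with the $|V|$ ancillas, I would track the branch in which every ancilla begins and ends in $\ket{0}$; using that the boundary is empty before the first and after the last swallow (so the edge registers start and end in the all-$\ket{0}$ state), each factor contributes the scalar $1/\norm{\CO_{v_i}}$ and the surviving amplitude is exactly
\begin{align*}
  \bra{\B 0}W\ket{\B 0} = \Big(\prod_{i}\frac{1}{\norm{\CO_{v_i}}}\Big)\,T(G,\CM) = \frac{T(G,\CM)}{\Delta} \ .
\end{align*}
Because $W$ is unitary this amplitude has magnitude at most $1$, confirming that $\Delta$ is the correct scale.

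Finally I would estimate the complex number $\bra{\B 0}W\ket{\B 0}$ by the standard Hadamard test: with one control qubit and a controlled-$W$, a single run yields a $\pm1$ outcome whose expectation is $\mathrm{Re}\,\bra{\B 0}W\ket{\B 0}$ (and, after inserting a phase gate on the control, $\mathrm{Im}\,\bra{\B 0}W\ket{\B 0}$). Each such estimator is bounded, hence of variance at most $1$, so averaging $O(\epsilon^{-2})$ independent runs and invoking Chebyshev's inequality gives each of the real and imaginary parts to additive error $\epsilon/2$ with probability at least $7/8$; a union bound places the estimate of the amplitude within $\epsilon$ of $\bra{\B 0}W\ket{\B 0}$ with probability at least $3/4$, and multiplying by $\Delta$ yields the output $r$ with $\Pr\big(|T(G,\CM)-r|\ge \epsilon\Delta\big)\le\frac14$. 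For the running time, $W$ is a product of $|V|$ gates, each a controlled dilated unitary on at most $d+1$ qudits implementable in $\poly(q^d)$ time by \Lem{lem:tounitary}, and the whole procedure is repeated $O(\epsilon^{-2})$ times, for a total of $|V|\cdot\epsilon^{-2}\cdot\poly(q^d)$.

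I expect the main obstacle to lie in the bookkeeping of the second paragraph: reconciling the varying rank of the intermediate tensors $\ket{A_i}$ with a fixed collection of quantum registers, making the non-square dilation precise so that produced and consumed edge-qudits interlock consistently, and verifying that the post-selected ``all ancillas $\ket{0}$'' branch multiplies out cleanly to $T(G,\CM)/\Delta$ without contributions from the discarded $\ket{1}$ branches.
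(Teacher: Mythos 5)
Your proposal is correct and follows essentially the same route as the paper: dilate each (padded, norm-preserving square) swallowing operator to a unitary via Lemma~\ref{lem:tounitary} with one fresh ancilla per vertex, observe that the all-ancillas-$\ket{0}$ branch of the composed unitary carries amplitude $T(G,\CM)/\prod_v\norm{\CO_v}$, estimate that amplitude with the Hadamard test using $O(\epsilon^{-2})$ repetitions, and rescale by $\Delta$. The bookkeeping issues you flag at the end (non-square dilations, varying intermediate ranks, and the non-interference of the discarded $\ket{1}$ branches since each ancilla is touched only once) are exactly the points the paper's inductive argument handles, and your treatment of them is sound.
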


\begin{proof}
  
 We set $n\EqDef |V|$.
  As noted the end of \Sec{sec:bubbling}, given a tensor network and
  a bubbeling $B=(v_1, \ldots, v_n)$, the process of swallowing
  defines a series of vectors that live in possibly different
  Hilbert spaces. We start with a normalized vector $\ket{\Omega}$
  that lives in a one-dimensional Hilbert space. Then
  \begin{align}
    \ket{A_1} &= \CO_{v_1}\ket{\Omega} \ , \nonumber \\
    \ket{A_2} &= \CO_{v_2}\ket{A_1} = \CO_{v_2}\CO_{v_1}\ket{\Omega} \ , 
      \nonumber \\
    \vdots  \label{eq:A-vec}\\
    \ket{A_n} &= \CO_{v_n}\cdots\CO_{v_1}\ket{\Omega} =
      T(G,M)\ket{\Omega} \ .\nonumber
  \end{align}
  The above states, of course, cannot be directly stored in a
  quantum computer because they are not necessarily normalized. We
  solve this problem by moving to a larger Hilbert space by adding
  ancilla qubits. Specifically, at the $i$'th step, the state of
  swallowing process will be stored in a \emph{normalized} state
  $\ket{\psi_i}$ that lives in the Hilbert space
  $H^{\otimes|Z_i|}\otimes \BBB^{\otimes i}$.
  $H^{\otimes|Z_i|}$ is the space where $\ket{A_i}$ lives
  (see \Sec{sec:bubbling}), and $\BBB^{\otimes i}$ is the space of
  $i$ ancilla qubits. Any vector in that space can be uniquely
  expanded in the standard basis of the ancilla qubits:
  \begin{align}
    \ket{\phi} = \sum_{s} \ket{v_s}\otimes \ket{s} \ .
  \end{align}
  Here $\ket{v_s} \in H^{\otimes|Z_i|}$, and $\ket{s}$ is a standard
  basis element of the ancilla subspace that corresponds to the
  classical string $s$ of $i$ bits. In terms of this expansion, the
  state $\ket{\psi_i}$, which encapsulates the state of the
  swallowing process, will be given by
  \begin{align}
  \label{eq:psi-i}
    \ket{\psi_i} = 
      \frac{1}{\norm{\CO_{v_1}}\cdots\norm{\CO_{v_i}}}\ket{A_i}
         \otimes
        \underbrace{\ket{0}\otimes\cdots\otimes\ket{0}}_{\text{$i$
        ancilla qubits}}
        + \ldots
  \end{align}
  In other words, projecting the ancilla qubits of $\ket{\psi_i}$ on
  $\ket{0}\otimes\cdots\otimes\ket{0}$, will result in the state 
  \begin{align*}
    \frac{\ket{A_i}}{\norm{\CO_{v_1}}\cdots \norm{\CO_{v_2}}} \ .
  \end{align*} 
  Notice that by Eqs.~(\ref{eq:A-vec}), the norm of this vector is
  necessarily smaller than or equal to 1.

  Let now show that we can efficiently generate the $\ket{\psi_i}$
  states in \Eq{eq:psi-i} using a quantum computer. The proof is by
  induction.  We start by generating $\ket{\psi_1}$. Denote by $k$
  the degree of the vertex $v_1$. Since this is the first vertex to
  be swallowed, $\CO_{v_1}$ has no input edges and exactly $k$
  output edges. Its domain is therefore a trivial one dimensional
  space, which is spanned by a normalized vector $\ket{\Omega}$, and
  its co-domain is $H^{\otimes k}$.
  
  Then we define the operator $\tilde{\CO}_{v_1}:H^{\otimes k}\to
  H^{\otimes k}$ by 
  \begin{align}
    \tilde{\CO}_{v_1}\big(\ket{0} \otimes \dots \otimes \ket{0}\big) \EqDef
      \CO_{v_1}\ket{\Omega}= \ket{A_1}
  \end{align} 
  and $\tilde{\CO}_{v_1} \ket{\alpha} =0$ for every $\ket{\alpha}\in
  H^{\otimes k}$ that is orthogonal to $\ket{0} \otimes \dots
  \otimes \ket{0}$. Notice that $\norm{\tilde{\CO}_{v_1}} =
  \norm{\CO_{v_1}}$. We initialize one extra ancilla qubit to
  $\ket{0}$ and apply Lemma~\ref{lem:tounitary} to the all zero
  basis state to get
  \begin{align}
    \ket{\psi_1} =  \frac{1}{\norm{\CO_{v_1}}}\ket{A_1}\otimes\ket{0}
    + \ldots
  \end{align}
  Here, as in \Eq{eq:psi-i}, the 3 dots stand for the rest of the
  terms that one obtain when decomposing $\ket{\psi_1}$ according to
  the standard basis of the ancialla qubit (in other words, here it
  would be some vector $\ket{\phi}\otimes\ket{1}$). Notice also that the
  whole process can be done in $\poly(q^k)\le \poly(q^d)$ quantum
  time. This proves the $i=1$ case.
  
  Assume now that we have ``swallowed'' $i-1$ vertices, and have
  created the state $\ket{\psi_{i-1}}$. By \Def{def:swallow-op},
  $\CO_{v_i}=\Id_J\otimes M^{K,L}$, with $K$ corresponding to the
  set of input edges, $L$ to the set of output edges, and $J$ to the§
  set of untouched edges that are not connected to $v_i$.  We can
  therefore ignore all the registers that correspond to the
  untouched edges and concentrate only on the ``active'' registers
  $K,L$. These are at most $d$ registers, each holding numbers between
  $0,\ldots ,q-1$. All together they are therefore described by
  mostly $d\log q$ qubits.
  
  Define $k=|K|$ and $\ell=|L|$, (note that $k+\ell \leq d$), and
  assume first that $k=\ell$. Then we initialize one new qubit to
  $\ket{0}$ and apply Lemma~\ref{lem:tounitary} to $M^{K,L}$ (which
  is now a square matrix), and transform $\ket{\psi_{i-1}}$ into
  $\ket{\psi_{i}}$ with the property that
  \begin{align}
  \label{eq:temp-state}
  \ket{\psi_i} = 
    \frac{1}{\norm{M^{K,L}}} \big[ \Id_J \otimes M^{K,L}
      \text{$\ket{A_{i-1}}$}
      \big]\otimes
      \underbrace{ \ket{0} \otimes 
        \dots \otimes \ket{0}}_{\mbox{$i$ qubits}} + \ldots
  \end{align}
  
  Using the fact that $\norm{M^{K,L}}=\norm{\CO_{v_i}}$, and that
  $\big( \Id_J \otimes M^{K,L}\big)\ket{A_{i-1}}=
  \CO_{v_i}\ket{A_{i-1}}=\ket{A_i}$, we find that $\ket{\psi _i}$
  satisfies (\ref{eq:psi-i}). Moreover, this transformation is done
  in $\poly(q^k)\le \poly(q^d)$ time.
  
  When $k<\ell$ we simply add $\ell-k$ input registers and set them
  to $\ket{0}$.  We redefine $M^{K,L}$ to be a square matrix that
  operates identically as the original transformation, provided that
  the extra $k-l$ registers are all set to $\ket{0}$, and otherwise
  acts as the zero operator. This guarantees that the $M^{K,L}$
  preserves its norm. We can now repeat the $k=\ell$ case. Notice
  that this process can also  be done in $\poly(q^\ell)\le \poly(q^d)$
  time.
  
  Similarly, in the $k>\ell$ case, we add $k-\ell$ output registers
  and redefine $M^{K,L}$ to always set them to $\ket{0}$, hence
  preserving its norm. This finishes the proof of the induction.
  
  Repeating this process all the way up to $i=n$, we generate in
  $n\cdot\poly(q^d)$ time the state
  \begin{align}
    \ket{\psi_n} = \frac{T(G,\CM)}{\prod_i\norm{\CO_{v_i}}}
    \underbrace{\ket{0}\otimes\dots\otimes\ket{0}}_{\text{$n$ ancilla qubits}}
    + \ldots \ ,
  \end{align}
  which lies in a Hilbert space that is composed entirely of ancilla
  qubits.
  We now use the Hadamard test (see \App{sec:Hadamard}) to estimate
  the inner product of $\ket{\psi_n}$ with
  $\ket{0}\otimes\dots\otimes\ket{0}$. For any given $\epsilon>0$,
  we generate $\CO(\epsilon^{-2})$ copies of $\ket{\psi_n}$ and
  after the appropriate measurement, we obtain a complex number $r'$
  such that 
  \begin{align}
    \Pr\Big[\Big| r' - T(G,\CM) / \prod_i\norm{\CO_{v_i}}\Big| \ge
    \epsilon \Big] \le \frac{1}{4} \ .
  \end{align}
  All together, this is done in $|V|\cdot\epsilon^{-2}\cdot\poly(q^d)$
  time. Multiplying by $\Delta=\prod_i\norm{\CO_{v_i}}$, and
  outputing $r=\Delta r'$, proves the theorem.

\end{proof}

We note that $\Delta$, the additive scale of approximation of a
given tensor network, depends on the choice of bubbling.  The issue
of finding a good bubbling for a given tensor network is far from
trivial. In fact, it is known that the closely-related problem of
finding the tree width \cite{ref:Mar05} of an arbitrary
graph\footnote{The tree width of a graph is equivalent to the best
bubble width of a graph\cite{ref:Aha06c}. For a given bubbling, the
bubble width is the maxmial number of edges crossing the bubble
during the swallowing process. The best bubble width is therefore
the minimal bubble width over all bubblings of the
graph.}\label{footnote:bubble-width} is NP-hard \cite{ref:Arn87}. In
the present work we treat the bubbling as an external object, given
to us together with the tensor-network. In \Sec{sec:stat}, where we
construct tensor-networks that calculate the partition functions of
classical models, we employ a simple bubbling that seems to have
certain advantages, but in no way does this mean that these are
optimal. Nevertheless, already from these examples it becomes clear
how different bubblings can yield very different approximation
scales for the same network.


\section{The hardness and completeness of approximating a tensor network}
\label{sec:complete}

As mentioned earlier, special attention needs to be paid to the
nature of the additive approximation in Theorem~\ref{thm:main}.
Additive approximations are tricky; the approximation scale $\Delta$
might be exponentially larger than $|T(G,\CM)|$, in which case the
output of the algorithm is meaningless. Unfortunately, ruling out
this possibility is difficult: if we want to bound the ratio between
$\Delta$ and $|T(G,\CM)|$, we must have some other, external
estimate for the latter, which might generally be a difficult task.

We address this issue in a two ways. First, we will consider
the \emph{hardness} of the approximation. We will show that our
additive approximation is BQP-hard for certain classes of networks, which
will prove that the approximation is non-trivial for many instances
of the problem. 

Second, we will focus on the
\emph{completeness} of the approximation. We will argue that
Theorem~\ref{thm:main} can be viewed as a new way of casting
quantum computation rather than as an approximation result.

We begin with a brief review of known complexity results for the
evaluation of tensor networks. 

\subsection{Classical hardness results}

Tensor-networks can be very hard to evaluate; they are
the sum an exponential number of terms. A well-known example is the
3-coloring problem of a graph \cite{ref:Pap94}. We say that a graph
$G$ is 3-colorable if it is possible to color its vertices in one of
3 colors such that adjacent vertices would always be colored
differently. Deciding whether a graph is 3-colorable or not is a
famous \NP-hard problem, even when restricted to the class of planar
graphs of degree 4 \cite{ref:Gar74}. Moreover, counting the number
of possible colorings in known to be a \SharpP-complete problem
(see, for example, \Ref{ref:Dye04}). However, given a graph
$G=(V,E)$, it is an easy exercise to construct a tensor network that
counts its total number of 3-colorings: set $q=3$, define the
tensors at the vertices to give $1$ if all their edges are colored
identically and zero otherwise, and finally place new vertices in
the middle of each edge, and define its tensor to give $0$ when its
two edges are labeled with the same color and $1$ otherwise. We leave it to the reader
to verify the correctness of this construction. 

So exact evaluation of a tensor-network might be \SharpP-hard. But
what about approximations, in particular multiplicative
approximation?  As we allude to in \Sec{sec:top}, one can define a
tensor-network that calculates the \emph{Tutte polynomial} of a
planar graph. This is a two-variable polynomial that can be defined
for every graph $G$. It encodes an extremely wide range of
interesting combinatorial properties of $G$, making it central in
graph theory \cite[p.~45]{ref:Wel93}. Its exact evaluation turns out
to be \SharpP-hard at all but trivial points \cite{ref:Jae90}.
Moreover, recent results show that even a multiplicative
approximation to it (FPRAS)\footnote{See the paragraph following Definition  \ref{def:adaprox} for a definition of an FPRAS approximation.} is \NP-hard, (and sometimes even \SharpP-hard) for a
large part of the Tutte plane \cite{ref:Les07}. Therefore there
exists families of tensor-network for which FPRAS approximation is
also \NP-hard.

Finally, it turns out that additive approximations can also be
\NP-hard. Indeed, a simple construction in Theorem~4.4 of
\Ref{ref:Bor05} shows that an additive approximation of the
$q$-coloring problem with a scale $\Delta=(q-1-\delta)^{|V|}$, for
any $0<\delta<q-1$ and $q\ge 3$ is \NP-hard since it can be used to
decide whether a graph is q-colorable.

\subsection{Quantum hardness results}

We now show that there exist classes of tensor-networks for
which additive approximations are \BQP\ hard.  As
we have seen in \Sec{sec:map}, tensor networks can represent quantum
states, as well as linear maps over these states.  It is not
surprising that quantum circuits can be represented by tensor
networks.  Indeed, this observation appears in many recent studies
that try to draw the border between quantum and classical
complexities, characterize the nature of entanglement, and find
efficient algorithms to simulate certain classes of quantum systems
(see, for example, Refs.~\cite{ref:Vid03a, ref:Vid04a, ref:Ver04a,
ref:Mar05, ref:Shi06,ref:Aha06c,ref:Vid07a, ref:Hub08}). For sake of
completeness, we show how this encoding can be done, but see
\Ref{ref:Mar05} for a broader view.

Consider a quantum circuit $Q=Q_{L}\cdot\ldots\cdot Q_1$ that is
defined on $n$ qubits.  Denote by $p_0$ the probability of measuring
a $0$ in the last qubit of the original circuit $Q$ applied to
$\ket{0^{\otimes n}}$.  To perform universal quantum computation, it
is enough to distinguish between the cases when $p_0< 1/3$ and
$p_0>2/3$ for any circuit $Q$.  We define a related circuit $U$ on
$n+1$ qubits: $U$ applies $Q$ to $\ket{0^{\otimes n}}\EqDef\ket{0}^{\otimes n}$, then copies
the last qubit of $Q$ to the additional qubit by a $CNOT$ gate, and
then applies $Q^{-1}$ (see \Fig{fig:Q-circuit}).  It is a
straightforward algebraic exercise to show that $\bra{0^{\otimes
n}}U\ket{0^{\otimes n}}=p_0$ for the original circuit $Q$.  We will
construct a tensor network whose value is $T(G,\CM)=\bra{0^{\otimes
n}}U\ket{0^{\otimes n}}$ such that a straightforward bubbling of
this network, which is associated with the original ordering of the
circuit, will yield an approximation scale $\Delta=1$. This will
enable us to distinguish between the two cases $\bra{0^{\otimes
n}}U\ket{0^{\otimes n}}=p_0\ge 2/3$ or $\bra{0^{\otimes
n}}U\ket{0^{\otimes n}}=p_0\le 1/3$ and thus show that the problem
of additively approximating tensor networks to the scale described
is quantum hard.

\begin{figure}
  \center \includegraphics[scale=0.7]{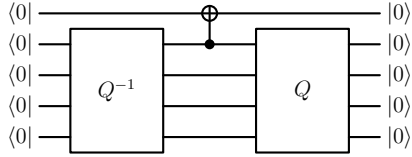} 
  \caption{Constructing a quantum circuit $U$ from another circuit
  $Q$ such that $\bra{0^{\otimes n}}U\ket{0^{\otimes n}}$ is equal
  to the probability measuring the first qubit of $Q\ket{0^{\otimes n}}$ in
  the state $\ket{0}$.
  \label{fig:Q-circuit}}
\end{figure}

Let us now define the tensor-network. The dimension of every tensor
is $q=2$, corresponding to the the two possible values of a qubit.
The network consists of 3 types of tensors:
\begin{itemize}
  \item Every $d$-local gate $Q$ is translated into a $2d$-rank tensor
    with $d$ input edges and $d$ output edges:
    \begin{align}
      \label{eq:gate-tensor}
       M^{(Q)}_{k_1, \ldots, k_d; \ell_i, \ldots, \ell_d} \EqDef
       \bra{\ell_1}\otimes\ldots\otimes\bra{\ell_d} Q
       \ket{k_1}\otimes\ldots\otimes\ket{k_d} \Longleftrightarrow
       \includegraphics[bb=-25 -3 25 10] {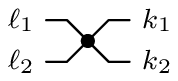}       
    \end{align}
   
  \item Every input qubit $\ket{0}$ is translated into a rank-1 tensor
    \begin{align}
      \label{eq:ket-tensor}
      M^{\ket{0}}_{k} = \delta_{k,0} \Longleftrightarrow
       \includegraphics{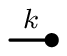}
    \end{align}
    
  \item Every output qubit $\bra{0}$ is translated into a rank-1 tensor
    \begin{align}
      \label{eq:bra-tensor}
      M^{\bra{0}}_{\ell} = \delta_{\ell,0} \Longleftrightarrow
       \includegraphics{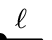}
    \end{align}

\end{itemize}
Contracting these tensors according to the topological structure of
the circuit, we obtain a tensor network $T(G,\CM)$, and it is a
straightforward exercise to check that $T(G,\CM) = \bra{0^{\otimes
n}}U\ket{0^{\otimes n}}$. Finally, when bubbling the network
according to the natural evolution of the circuit, the swallowing
operators of the tensors associated to the gates become the gates
themselves, hence their norm is 1. Similarly, the norms of the
swallowing operators that are associated with
Eqs.~(\ref{eq:ket-tensor}, \ref{eq:bra-tensor}), are also easily
seen to be 1. All in all we have an approximation scale $\Delta=1$.

We therefore reach to the following corollary:
\begin{corollary}
  There exist families of tensor networks for which the additive
  approximation in Theorem~\ref{thm:main} is
  $\mathsf{BQP}$-hard. In particular, all families that
  correspond to families of universal quantum circuits by the
  construction above.
\end{corollary}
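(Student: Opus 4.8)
The plan is to prove the corollary by showing that the tensor-network construction just described, applied to a universal family of quantum circuits, produces an additive approximation problem whose solution would let us decide a BQP-complete problem. The heart of the argument is already laid out in the text preceding the statement; my job is to assemble the pieces cleanly.

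First I would fix a promise problem that is complete for \BQP: given a quantum circuit $Q=Q_L\cdots Q_1$ on $n$ qubits composed of gates from a fixed universal gate set, decide whether $p_0 \ge 2/3$ or $p_0 \le 1/3$, where $p_0$ is the probability of measuring $\ket{0}$ in the last qubit of $Q\ket{0^{\otimes n}}$. I would then invoke the circuit $U$ on $n+1$ qubits (apply $Q$, copy the last qubit with a \textsf{CNOT}, apply $Q^{-1}$) and recall the stated identity $\bra{0^{\otimes n}}U\ket{0^{\otimes n}}=p_0$, so that distinguishing the two cases amounts to approximating $\bra{0^{\otimes n}}U\ket{0^{\otimes n}}$ to within additive error less than $1/6$.

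Next I would invoke the tensor-network $T(G,\CM)$ built from the three tensor types in Eqs.~(\ref{eq:gate-tensor}, \ref{eq:ket-tensor}, \ref{eq:bra-tensor}), using the already-noted fact that its value equals $\bra{0^{\otimes n}}U\ket{0^{\otimes n}}$. The key quantitative step is the analysis of the approximation scale $\Delta=\prod_{v\in V}\norm{\CO_v}$ from \Eq{eq:approx} for the bubbling induced by the temporal order of the circuit. Here each gate tensor's swallowing operator is exactly the corresponding unitary gate (so $\norm{\CO_v}=1$), and the input/output $\delta$-tensors have swallowing operators that merely initialize or project a single register onto $\ket{0}$, so these also have unit norm. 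Multiplying over all vertices gives $\Delta=1$.

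With $\Delta=1$ in hand, I would apply \Thm{thm:main}: choosing a fixed $\epsilon$ small enough that $\epsilon\Delta=\epsilon<1/6$, the theorem yields a quantum algorithm running in time $|V|\cdot\epsilon^{-2}\cdot\poly(q^d)=\poly(n)$ (since $q=2$ and the degree $d$ is bounded by the locality of the gates) that outputs $r$ with $|T(G,\CM)-r|<1/6$ with probability at least $3/4$. This separates the cases $p_0\ge 2/3$ and $p_0\le 1/3$, so a black-box additive approximator for this family of tensor networks decides the \BQP-complete problem, establishing \BQP-hardness. I expect the only point requiring genuine care to be the norm bookkeeping for the swallowing operators of the boundary ($\ket{0}$ and $\bra{0}$) tensors and the verification that the circuit's natural ordering is a valid bubbling whose swallowing operators coincide with the gates; the rest follows formally from the completeness of the chosen promise problem and a direct substitution into \Thm{thm:main}.
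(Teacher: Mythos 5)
Your proposal is correct and follows essentially the same route as the paper: reduce from the standard \BQP-complete promise problem via the circuit $U=Q^{-1}\cdot \mathrm{CNOT}\cdot Q$, encode it as the tensor network of Eqs.~(\ref{eq:gate-tensor})--(\ref{eq:bra-tensor}), observe that the circuit-order bubbling gives unit-norm swallowing operators so $\Delta=1$, and invoke Theorem~\ref{thm:main} with a fixed $\epsilon<1/6$. The norm bookkeeping you flag for the boundary $\delta$-tensors (a $0\to1$ initialization to $\ket{0}$ and a $1\to0$ projection onto $\bra{0}$, each of operator norm $1$) is exactly the observation the paper relies on.
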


It is therefore evident that, for certain collections of tensor
networks, the approximation in Theorem~\ref{thm:main} is
non-trivial. This does not necessarily hold for every member in such
``universal'' families of networks, only for the family as a
whole. The universal families cited in the above corollary originate
from quantum circuits and the quantum universality of their
approximation relies heavily on the unitarity of their operators.
There are, however, other universal families of tensor networks that
are not so tightly related to quantum computation. In \Sec{sec:top},
we refer to one of these families, a family of tensor networks that
approximate the multivariate Tutte polynomial. Unlike the example
above, their underlying operatorial structure is non-unitary. The
proof that approximating these tensor-networks is quantum-hard can
be found in \Ref{ref:Aha07}.

The fact that quantum circuits can be viewed as tensor networks is
the main theme in the paper of Markov \& Shi \cite{ref:Mar05} and
later in \Ref{ref:Aha06c}. These papers study the question of when a
tensor-network can be evaluated classically. \Ref{ref:Mar05} uses
the notion of \emph{tree width} of a graph, which is equivalent to
the notion of \emph{bubble width} that is used in \Ref{ref:Aha06c}
(see footnote on page~\pageref{footnote:bubble-width}). It is shown
that a sufficient condition for an efficient evaluation of a
tensor-network is that the tree width (or, equivalently, its bubble
width) of the graph is of logarithmic size.  To minimize the running
time of the simulation, one should choose a bubbling with a minimal
bubble width. This is done \emph{regardless} of the original
ordering of the circuits. This leads us to the reconsider
Theorem~\ref{thm:main} as essentially a new view of quantum
computation, which we discuss in the next section.

\subsection{Completeness: tensor networks as a different point of view on quantum
  computation - the role of interference}
\label{sec:QC}

Loosely speaking, a useful quantum algorithm must manipulate the
\emph{interferences} of the wave function in a smart way; an
instance of a YES/NO problem must be mapped to a circuit that
produces a constructive/destructive interference. Formally, a
quantum algorithm solves a decision problem, if for every instance
$x$, we can (efficiently) generate a quantum circuit $U_x$ such that
$\bra{0^{\otimes n}}U_x\ket{0^{\otimes n}} \leq 1/3$ for a YES
instance and $\bra{0^{\otimes n}}U_x\ket{0^{\otimes n}} \ge 2/3$ for
a NO instance.

In view of Theorem~\ref{thm:main}, we can rephrase this demand in
terms of tensor-networks. The constructive interference demand
translates into $|T(G,\CM)|$ being of the same order of the
approximation scale $\Delta$ (or, only polynomially smaller). In
other words, \emph{we care less about the actual value $T(G,\CM)$ of
the tensor network and more about the ratio
$|T_x(G,\CM)|/\Delta_x$}. The following corollary can be seen as an
alternative formulation for an efficient quantum computation that
stresses this point:
\begin{corol}[Efficient tensor-network based quantum computation]
\label{corol:quant-comp} 

  A decision problem is in BQP if and only if there is an efficient
  classical transformation that maps any instance $x$ of the problem
  into a tensor-network $T_x(G_x,M_x)$ over a fixed $q$ and a graph
  $G_x$ of a maximal degree $d$, and a corresponding bubbling $B_x$
  such that:
  \begin{align}
    \text{$x$ is a `YES' instance} \quad&\Rightarrow\quad \frac{|T_x(G,\CM)|}{\Delta_x} 
      \ge 2/\poly(|x|)  \ , \\
    \text{$x$ is a `NO' instance} \quad&\Rightarrow\quad \frac{|T_x(G,\CM)|}{\Delta_x} 
      \le 1/\poly(|x|) \ ,
  \end{align}
  where $\Delta_x$ is as in Theorem \ref{thm:main}.
\end{corol}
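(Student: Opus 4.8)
The plan is to establish the two directions of the equivalence separately, each of which reduces to material already developed.

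For the ``if'' direction ($\Leftarrow$), suppose we are given the efficient classical map $x \mapsto (T_x(G_x,M_x), B_x)$ with the stated gap. I would simply run the algorithm of \Thm{thm:main} on $(G_x, \CM_x)$ with its bubbling $B_x$. Since $q$ and $d$ are fixed, $\poly(q^d)$ is a constant, and since the map is efficient $|V(G_x)|$ is polynomial in $|x|$; hence the algorithm runs in time $|V|\cdot\epsilon^{-2}\cdot\poly(q^d)$, which is polynomial in $|x|$ once $\epsilon$ is chosen inverse-polynomially. The algorithm outputs $r$ with $\Pr(|T_x - r| \ge \epsilon\Delta_x) \le 1/4$; dividing through by $\Delta_x$ and using the reverse triangle inequality gives an estimate $|r|/\Delta_x$ of $|T_x|/\Delta_x$ accurate to within $\epsilon$ with probability $\ge 3/4$. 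Choosing $\epsilon$ to be, say, one third of the promised gap $1/\poly(|x|)$ lets us threshold $|r|/\Delta_x$ to decide which of the two cases holds with bounded error, and standard majority amplification boosts the confidence. This places the decision problem in \BQP.

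For the ``only if'' direction ($\Rightarrow$), I would invoke the characterization of \BQP\ by uniform families of quantum circuits and reuse the encoding of the previous subsection. Given a \BQP\ problem, there is a poly-time classical procedure producing, for each instance $x$, a circuit $Q_x$ on $n = \poly(|x|)$ qubits with $p_0 > 2/3$ on YES instances and $p_0 < 1/3$ on NO instances. Forming the circuit $U_x$ (apply $Q_x$, $CNOT$ the output onto a fresh qubit, then apply $Q_x^{-1}$) and translating it gate-by-gate into the tensor network of Eqs.~(\ref{eq:gate-tensor})--(\ref{eq:bra-tensor}) yields a network with $q=2$, bounded degree $d$ inherited from the locality of the gates, and value $T_x(G_x,\CM_x) = \bra{0^{\otimes n}}U_x\ket{0^{\otimes n}} = p_0$. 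The whole translation is an efficient classical transformation. Taking the natural bubbling that follows the temporal order of the circuit, every swallowing operator is either a unitary gate or one of the norm-one ket/bra tensors, so $\Delta_x = \prod_{v}\norm{\CO_v} = 1$. Hence $|T_x|/\Delta_x = p_0$, which is $> 2/3$ on YES and $< 1/3$ on NO; taking the constant polynomial $\poly(|x|)\equiv 3$ these become $|T_x|/\Delta_x \ge 2/\poly(|x|)$ and $|T_x|/\Delta_x \le 1/\poly(|x|)$ respectively, matching the required inequalities.

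The two directions are individually short because the heavy lifting lives in \Thm{thm:main} and in the circuit-to-network encoding. The step that demands the most care is bookkeeping the polynomial gaps: I must verify that the inverse-polynomial separation $2/\poly$ versus $1/\poly$ in the hypothesis is wide enough to survive the additive error $\epsilon\Delta_x$ of \Thm{thm:main} after dividing by $\Delta_x$, and conversely that the constant $2/3$--$1/3$ gap coming from the circuit model indeed fits the $\ge 2/\poly$, $\le 1/\poly$ template with a legitimate (possibly constant) polynomial. A secondary point to check is that the circuit encoding keeps both $q$ and the maximal degree $d$ fixed independently of $|x|$, which is what makes $\poly(q^d)$ a constant and the transformation genuinely efficient.
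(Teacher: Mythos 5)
Your proposal is correct and follows essentially the same route the paper intends: the backward direction is a direct application of Theorem~\ref{thm:main} with an inverse-polynomial choice of $\epsilon$ and a threshold test, and the forward direction reuses the circuit-to-network encoding of Eqs.~(\ref{eq:gate-tensor})--(\ref{eq:bra-tensor}) with the temporal bubbling giving $\Delta_x=1$. The bookkeeping you flag (the gap surviving the $\epsilon\Delta_x$ error after division by $\Delta_x$, and the constant $2/3$ vs.\ $1/3$ gap fitting the $2/\poly$, $1/\poly$ template) is handled correctly.
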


This formulation generalizes our notion of quantum
computation in two ways: 
\begin{itemize}
  \item Time evolution of the circuit is no longer fixed, we may pick any 
    bubbling that provides the best approximation scale.
    
  \item Unitarity is gone. The operators no longer need to be 
    unitary or even have the same domain and co-domain. We are free
    to construct circuits with non-unitary gates, as well as graphs
    with arbitrary topology, as long as the vertex degree is
    bounded. The approximation scale, however, might no
    longer be $1$.
\end{itemize}
We are hopeful that this new way of looking at quantum computation
will lead to new algorithms beyond those discussed in the next two sections.

\section{Classical statistical mechanics models}
\label{sec:stat}

In this section we present a set of models from statistical physics
that can be defined on arbitrary graphs. We will see how the
tensor-network formalism of the previous section can be used to
construct efficient quantum algorithms that approximate the
partition function of these models.  After introducing these models, we will construct the corresponding tensor network and apply our results to produce a general quantum algorithm (Corollary \Ref{}).  Though these
algorithms are new, the connection between classical statistical
mechanics models and quantum computation is not, and has been
discussed previously in \Ref{ref:Nes06, ref:Nes07,ref:Comm07,
ref:Aha07, ref:GerLid08, ref:Ger08}. We hope that this section will
enrich and clarify the nature of this connection.

A proper introduction of these statistical models is far beyond the
scope of this paper; here we will only provide the details necessary
for understanding the tensor-network constructions.  An interested
reader can find an introduction to these models in any standard text
book on the subject, for example, \Ref{ref:Cal85}.  

%
%

\subsection{A brief introduction to classical statistical models on
  graphs}
\label{sec:stat:intro}

In statistical physics, one is often interested in the macroscopic
behavior of a system that is made from a very large number of
microscopic systems that interact with each other. In
most cases, the everyday systems that we wish to describe are far
too complex to be treated analytically. A common practice is
therefore to study toy models, which are simple enough to be
analyzed analytically, yet are rich enough to teach us something
about the more realistic models. 

A very broad class of such toy models, which we call \emph{$q$-state
models}, can be defined on finite graphs. We consider a graph
$G=(V,E)$ and view its vertices as the microscopic subsystems that
we wish to model; for example, the atoms of a crystal. We assume
that each microscopic system can be found in one of $q$ possible
states that are numbered by $0,1,\ldots, q-1$. We will often refer
to these states as ``colors'', and to the labeling of all vertices
as a ``coloring'' of $G$. Such a coloring is denoted by a vector
$\sigma=(\sigma_1, \sigma_2, \ldots , \sigma_{|V|})$ that assigns a
color $\sigma_i$ to every vertex $i$. A coloring completely
specifies the \emph{microscopic state} of the system. We remark that
we do not require adjacent vertices of $G$ to be labeled by
different colors.

Next, we use the edges of the graph to denote \emph{interactions}
between the vertices. For every edge $e=(i,j)\in E$, we define a
(real) function $h_{ij}(\sigma_i,\sigma_j)$ (also denoted by
$h_e(\sigma_i, \sigma_j)$) that specifies the
interaction energy between the vertices $i$ and $j$.  The overall
energy of the system for a particular coloring is therefore
\begin{align}
  \label{def:energy}
  \mathcal{H}(\sigma) \EqDef \sum_{(i,j)\in E} h_{ij}(\sigma_i, \sigma_j) \ .
\end{align}

To understand the macroscopic behavior of the system, we would like
to know the probability of the system to be in a given microscopic
state. Usually, one assumes that the system is attached to another,
much bigger, system, which we call a ``heat bath''. The attachment
of the two systems means that energy can freely flow from one system
to the other. In such a case, under fairly simple assumptions that
will not be discussed here (see, for example, \Ref{ref:Cal85}), we
find that the probability of the system to be in a microscopic state
$\sigma$ is given by the Boltzmann-Gibbs distribution
\begin{align}
  \Pr(\sigma) = \frac{1}{Z(\beta)} e^{-\beta \mathcal{H}(\sigma)}  \ ,
  \qquad \beta = \frac{1}{k_B T} \ .
\end{align}
Here, $\beta$ is an external parameter, which is inversely
proportional to $T$, the temperature of the system, and $k_B$ is the
Boltzmann constant. $Z(\beta)$ is the normalization factor, which is
called the \emph{partition function} of the system, and is
given by
\begin{align}
\label{def:Z}
  Z(\beta) \EqDef \sum_\sigma e^{-\beta \mathcal{H}(\sigma)} \ .
\end{align}
It turns out that many interesting macroscopic properties of the
system can be deduced solely from the partition function. These
include the average energy of the system, its entropy, specific
heat, and more elaborate properties such as phase-transitions
\cite{ref:Cal85}. The calculation of the partition function is
therefore an important task in the theory of statistical physics.

Before explaining how this can be done in the framework of a tensor
network, we list some of the well-known models of statistical
mechanics that fall into this category, which were also discussed in
Refs.~\cite{ref:Nes06, ref:Nes07}:
\begin{enumerate}
  \item \textbf{Ising Model}\\
    In the Ising model every vertex can be colored one of two
    colors, or alternatively, every vertex denotes a classical
    spin that can point up or down. In order to keep the notation
    simple, let us assume that $\sigma_i$ holds the values
    $\{1,-1\}$ (instead of $\{0,1\}$). Then the interaction energy
    of every edge is simply: 
    \begin{align}
      h(\sigma_i, \sigma_j) = -J\sigma_i\sigma_j \ .
    \end{align}
    $J$ is called the \emph{coupling constant}. If $J>0$, the model
    is called \emph{ferromagnetic}. In this case, the neighboring
    spins will tend to point to the same direction.  When $J<0$, the
    model is called \emph{antiferromagnetic}, and the spins will
    tend to be antialigned. 
    
  \item \textbf{Clock Model}\\
    The $q$-state Clock model is a generalization of the Ising model
    for $q$ colors. Here, at each vertex the spin can point in one
    of $q$ equally spaced directions in the plane and are therefore
    specified by an angle
    \begin{align}
      \theta_n = \frac{2\pi n}{q} \ , \quad n=0, 1, \ldots, q-1 \ .
    \end{align}
    Then the interaction energy is
    \begin{align}
      h(\sigma_i, \sigma_j) = -J\cos \big( \theta_{\sigma_i} -
      \theta_{\sigma_j}) \ .
    \end{align}

  \item \textbf{Potts Model}\\
    The the $q$-state Potts model is another generalization of the
    Ising model, simpler than the clock model. As in the clock
    model, every vertex can be in one of $q$ colors, but instead of
    using the cosine function for the interaction energy, we use the
    Kronecker delta-function:
    \begin{align}
      h(\sigma_i, \sigma_j) = -J\delta_{\sigma_i,\sigma_j} \ .
    \end{align}
\end{enumerate}

The Ising, Clock, and Potts models are all examples of a class of
$q$-state models in which the coupling energies only depend on the
difference (modulo $q$) of the colors, i.e.  $h(\sigma_i, \sigma_j)
= h( (\sigma_i-\sigma_j) \bmod q)$; we shall call the class of
models with this property \emph{difference models}. In the above
three models, the coupling constant $J$ was the same for every edge,
and we say that these systems have \emph{homogeneous coupling}. In
general, different couplings can be used. 

\subsection{Constructing a tensor-network in the general case}

In this section we define a tensor network that
evaluates the partition function of the general $q$-state model.
Applying Theorem \ref{thm:main} to this tensor network will give a
quantum algorithm for approximating the partition function of the
general $q$-state model (Corollary \ref{col:q-state}).  We make no
assumptions on the functional form of the coupling energy functions
$h_e(\sigma_i, \sigma_j)$. In the next section we will assume that
$h(\sigma_i, \sigma_j) = h( \sigma_i - \sigma_j |_{\!\!\!\mod q})$ -
an assumption that is satisfied by the Ising model, the Potts model,
and the Clock model. For difference models, we can write a different
tensor-network which in turn leads to a different quantum algorithm
for approximating the partition function of difference models
(Corollary \ref{col:delta}).  For difference models, the
approximation scales in Corollary \ref{col:q-state} and
\ref{col:delta} are incomparable, i.e., there exist choices of
parameters for which each is better than the other. A tensor-network
seems to be a natural tool to calculate the partition function of a
$q$-state model: in both cases, we have a summation over all
possible labelings/colorings. However, in the partition function,
the summation is over a coloring of the \emph{vertices}, whereas is
in the tensor-network, the summation is over the labeling of the
\emph{edges}. To resolve this mismatch, we introduce a new graph
$\tilde{G}$, by putting a new vertex in the middle of every edge of
$G$ (see \Fig{fig:G-tilde}). We call these vertices \emph{energy
vertices}. They are denoted by $V_\epsilon$, and the vertices of $G$
are denoted by $V_G$. Then $\tilde{V}=V_\epsilon\cup V_G$,
$|\tilde{V}| = |V| + |E|$ and $|\tilde{E}| = 2|E|$.  On this graph,
we define the following network:
\begin{deff}[Tensor network for the general $q$-state statistical model]
\label{def:q-state-network}
  For the $q$-state statistical model that is defined on $G=(V,E)$
  with the interaction energy $h_{i,j}(\cdot,\cdot)$ for every
  $(i,j)\in E$ and an inverse temperature $\beta$, we define the
  following tensor network $T(\tilde{G}, M)$:
  \begin{itemize}
    \item \textbf{Graph:} The graph
      $\tilde{G}=(\tilde{V},\tilde{E})$ that is defined above.
    \item \textbf{Labeling:} Every edge $e\in \tilde{E}$ can be 
      colored in one of $q$ possible colors: $0,1, \ldots, q-1$.
    \item \textbf{Tensors:} We have two types of tensors: 
      \begin{enumerate}
        \item For $v\in V_G$ (original vertex of $G$), the tensor 
          $\B{M} _v$ is defined to be zero unless all its edges $G_v$ are
          colored identically, in which case it is $1$.
        \item For $v\in V_\epsilon$ (energy vertex), which is in the 
          middle of an original edge $(i,j)\in E$ of $G$, the tensor
          $\B{M} _v$ is defined by
        \begin{align}
          \label{eq:edge-tensor}
         ( \B{M}_v)_{\sigma_i, \sigma_j} \EqDef e^{-\beta h_{ij}(\sigma_i,
          \sigma_j)} \ .
        \end{align}
        Here $\sigma_i, \sigma_j$ are the coloring of the two edges
        that connect to $v$.
      \end{enumerate}
      
  \end{itemize}
  
\end{deff}

\begin{figure}
  \center \includegraphics[scale=1]{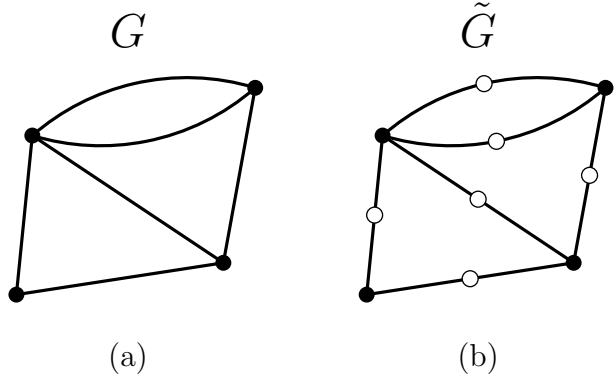} 
  
  \caption{Creating $\tilde{G}$ from $G$ by placing new "energy" vertices
    (unfilled) in the middle of every edge of $G$.
    \label{fig:G-tilde}}
\end{figure}

The following theorem shows that this tensor-network evaluates $Z_G$.
\begin{theorem}
  The tensor-network in \Def{def:q-state-network} evaluates
  $Z_G(\beta)$.
\end{theorem}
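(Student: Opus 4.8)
The plan is to expand the value of the network $T(\tilde G,\CM)$ directly through the labeling form \Eq{eq:alternative} and to check that the two families of tensors conspire to reproduce, term by term, the sum $Z_G(\beta)=\sum_\sigma e^{-\beta\mathcal{H}(\sigma)}$ of \Eq{def:Z}. Writing
$$T(\tilde G,\CM)=\sum_{l}\prod_{v\in\tilde V}M_v(l),$$
the sum runs over all assignments $l$ of colors $0,\dots,q-1$ to the $2|E|$ edges of $\tilde G$, and $\tilde V=V_G\cup V_\epsilon$. First I would isolate the effect of the original-vertex tensors: since each $v\in V_G$ retains in $\tilde G$ exactly the half-edges joining it to the surrounding energy vertices, the tensor $\B M_v$ is $1$ when all these incident half-edges carry a common color and $0$ otherwise.

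The key step is to turn this agreement constraint into a bijection between the \emph{nonzero} labelings $l$ and the vertex colorings $\sigma=(\sigma_1,\dots,\sigma_{|V|})$ of the original graph $G$. Given a nonzero labeling, the original-vertex factors force, for each $i\in V_G$, all half-edges at $i$ to share a single color, which I define to be $\sigma_i$; conversely, any coloring $\sigma$ of $V$ determines a unique such labeling by painting every half-edge incident to $i$ with $\sigma_i$. Under this identification every original-vertex tensor contributes a factor of $1$, so that only the energy vertices contribute to the product.

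It then remains to evaluate the energy-vertex factors along this bijection. An energy vertex $w$ sitting on the original edge $(i,j)$ sees precisely the two half-edges coming from $i$ and from $j$, which by the previous step carry colors $\sigma_i$ and $\sigma_j$; hence by \Eq{eq:edge-tensor} it contributes $e^{-\beta h_{ij}(\sigma_i,\sigma_j)}$. Multiplying over all energy vertices --- equivalently, over all edges $(i,j)\in E$ --- gives $\prod_{(i,j)\in E}e^{-\beta h_{ij}(\sigma_i,\sigma_j)}=e^{-\beta\mathcal{H}(\sigma)}$ by \Eq{def:energy}, and summing over all colorings $\sigma$ recovers $Z_G(\beta)$, as desired.

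I expect the only genuine subtlety to be the careful verification of the bijection, i.e.\ that the agreement constraints imposed independently at each original vertex are exactly what is needed to (a) annihilate every inconsistent labeling and (b) put the surviving labelings in one-to-one, factor-preserving correspondence with vertex colorings. Everything else is routine bookkeeping: the degree count ($\deg_{\tilde G}(i)=\deg_G(i)$ for $i\in V_G$ and $\deg_{\tilde G}(w)=2$ for $w\in V_\epsilon$) guarantees the tensors have the stated ranks, and the factorization $e^{-\beta\sum_e h_e}=\prod_e e^{-\beta h_e}$ that converts the exponential of a sum into a product of Boltzmann weights is immediate.
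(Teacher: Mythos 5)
Your proposal is correct and follows essentially the same route as the paper's own proof: expand $T(\tilde G,\CM)$ via \Eq{eq:alternative}, use the identity tensors at the original vertices to put the nonvanishing labelings in bijection with vertex colorings $\sigma$, and then read off $\prod_{(i,j)\in E}e^{-\beta h_{ij}(\sigma_i,\sigma_j)}=e^{-\beta\mathcal{H}(\sigma)}$ from the energy vertices. The paper states the bijection more tersely, but the argument is the same.
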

\begin{proof}
  By \Eq{eq:alternative}, the value of the tensor network is
  \begin{align}
    T(\tilde{G}, M) = \sum_l \left( \prod_{v\in V_G} \B{M} _v (l)\right) \cdot 
     \left(\prod_{v\in V_\epsilon} \B{M} _v (l)\right) \ .
  \end{align}
  For every labeling $l$, the term $\left( \prod_{v\in V_G}
  \B{M} _v (l)\right)$ vanishes unless the edges connected to each original vertex of $G$
  are labeled identically. This defines a unique coloring of each
  \emph{original vertex of $G$}, which we denote by $\sigma$, and
  therefore
  \begin{align}
    T(\tilde{G}, M) = \sum_{\sigma} \left(\prod_{v\in V_\epsilon}
    \B{M} _v (\sigma)\right) \ .
  \end{align}
  Then by \Eq{eq:edge-tensor}, we obtain
  \begin{align}
    T(\tilde{G}, M) = \sum_\sigma \prod_{e\in E} 
      e^{-\beta h_{ij}(\sigma_i, \sigma_j)}
    = \sum_\sigma e^{-\beta\mathcal{H(\sigma)}}=Z_G(\beta) \ .
  \end{align}
  
\end{proof}

We would now like to analyze the
approximation scale of the above tensor-network that results from a
simple bubbling. To do that, let us first agree on a notation to
describe the swallowing of a vertex: we say a vertex with $n+m$
adjoint edges is swallowed in a $n\to m$ fashion if before it is
swallowed it has $n$ adjoint edges that end inside the bubble, and
$m$ edges that end outside the bubble. Figure~\ref{fig:swallowing}
illustrates this notation.
\begin{figure}
  \center \includegraphics[scale=0.7]{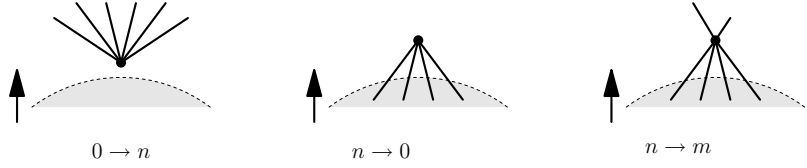}
  \caption{The $n\to m$ notation for describing the swallowing of a vertex.
  \label{fig:swallowing}}
\end{figure}
Consider now the following simple bubbling: we embed the parent
graph $G$ in a 3D space such that every vertex is put at a different
height and every edge is a straight line. Thus all edges are
non-horizontal. We then add the energy vertices in the middle of
every edge of $G$. Our bubbling is defined by swallowing $\tilde{G}$
using an horizontal plane (bubble) that moves from bottom to top.
The embedding ensures that every energy vertex is swallowed in a
$1\to 1$ fashion. The original vertices of $G$ can be swallowed
in many different ways. We set  $b$ to be the number of such vertices that
are swallowed in a $0\to n$ or a $n\to 0$ fashion.

Then the norms of this bubbling are as follow:
\begin{itemize}
  \item \textbf{Original vertices ($V_G$)}
    \begin{itemize}
      \item for the $b$ vertices that are swallowed in a  $0\to n$
        or $n\to 0$ fashion, the norm is $q^{1/2}$. Indeed in a
        $0\to n$ swallowing, the operator the normalized state
        $\ket{\Omega}$ and creates the state $\ket{0}^{\otimes
        n}+\ket{1}^{\otimes n}+\ldots+\ket{q}^{\otimes n}$, whose
        norm is $q^{1/2}$. Similarly, a $n\to 0$ accepts any state
        $\ket{\psi}$ and outputs the state $c\ket{\Omega}$, where
        $c$ is the inner product of $\ket{\psi}$ with
        $\ket{0}^{\otimes n}+\ket{1}^{\otimes
        n}+\ldots+\ket{q}^{\otimes n}$. Obviously, when $\ket{\psi}$
        is normalized, $|c|\le q^{1/2}$.  
        
      \item For the vertices that swallowed in a $n\to m$ fashion,
        with $n>0, m>0$, the norm is $1$. To see this, expand a
        normalized state in the domain $H^{\otimes n}$ of the
        operator in terms of the standard basis. The only terms in
        the expansion that will not be annihilated are those of the
        form $\ket{i}^{\otimes n}, i=0, \ldots, q-1$; they will be
        sent to $\ket{i}^{\otimes m}$ in the range of the operator.
        The operator is therefore the identity operator from a
        $q$-dimensional subspace in the domain to a $q$-dimensional
        subspace in the range, hence its norm is 1.  
    \end{itemize}

    \item \textbf{Energy vertices ($V_\epsilon$)}. These are always
    swallowed in a $1\to 1$ fashion. In that case, the tensors act 
     as a $q\times q$ matrix $\left( e^{-\beta h_e(\sigma_i,
    \sigma_j)} \right) _{\sigma_i,\sigma_j}$ 
    that maps the color space of one edge to the color
    space of the other edge; the norm of the tensor is the
    operator norm of the matrix. 
\end{itemize}    

Combining all of this together, we have an approximation scale of:
\begin{equation}  
    \Delta \le q^{b/2} \prod_{e\in E} \norm{e^{-\beta h_e}} \ .
\end{equation}

We now give an upper bound on $b$. Clearly $b\leq |V|$, but in many
embadings of $G$ in $\BBR^3$, $b$ is significantly smaller than
$|V|$.  Consider the 2 dimensional lattice; rotate the lattice so
that two opposite corners are the highest and lowest vertex of the
graph. Now the bubbling described above only swallows two
vertices in $V_G$ (the first and the last) in a $0 \rightarrow n$ or
$n \rightarrow 0$ fashion and thus $b=2$.  The identical argument
holds for any finite lattice of any dimension. More generally, we
can describe $b$ in terms of directed acyclic graphs on $G$.
\begin{definition}  
  Given a graph $G=(V, E)$, a \emph{directed acyclic graph on
  $G$} is an assignment of direction to each edge $E$ so that
  there are no directed cycles.  An \emph{extreme vertex} of a
  directed acyclic graph on $G$ is a vertex for which the adjacent
  edges are either all directed towards the vertex, or all directed
  away from the vertex. 
\end{definition}
It should be clear that any bubbling induces a directed acyclic
graph on $G$, directing any edge away from the vertex that is
swallowed first.  With this point of view, $b$ is then the number of
extreme points of the directed acyclic graph.
 
Putting this all together we have the following:

\begin{corollary}[An efficient quantum algorithm for the general $q$-state
  model] 
\label{col:q-state}
  Consider a $q$-state model defined over a finite graph $G=(V,E)$,
  and a directed acyclic graph on $G$ with $b$ extreme vertices. Then
  there exists an efficient quantum algorithm for an additive
  approximation of the partition function of the general $q$-state
  model that is defined over the finite graph $G=(V,E)$ with
  coupling energies $h_e(\sigma_i, \sigma_j)$. The approximation
  scale of the algorithm is 
  \begin{align} 
    \label{eq:q-state-err}
    \Delta \le q^{b/2} \prod_{e\in E} \norm{e^{-\beta h_e}} \ .
  \end{align}
  Here, $\norm{e^{-\beta h_e}}$ denotes the operator norm of the
  $q\times q$ matrix $\left( e^{-\beta h_e(\sigma_i, \sigma_j)}
  \right) _{\sigma_i,\sigma_j}$. 
  
  When $G$ is a finite lattice (of any dimension), there is always
  an acyclic graph with $b=2$, hence the resultant approximation
  scale is:
  \begin{align} 
    \Delta \le q\prod_{e\in E} \norm{e^{-\beta h_e}} \ .
  \end{align}   
\end{corollary}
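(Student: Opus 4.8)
The plan is to apply \Thm{thm:main} to the tensor network of \Def{def:q-state-network}, whose value equals $Z_G(\beta)$ by the theorem just proved, choosing a bubbling of $\tilde{G}$ whose induced orientation on $G$ is the given directed acyclic graph. All of the operator norms needed to evaluate the scale $\Delta=\prod_{v}\norm{\CO_v}$ have already been computed in the discussion preceding the statement, so the only genuinely new work is to produce such a bubbling from the directed acyclic graph and to check that the number of original vertices it swallows in a $0\to n$ or $n\to 0$ fashion is exactly the number $b$ of extreme vertices.

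First I would convert the directed acyclic graph into a height function. Since the orientation is acyclic it admits a linear extension, so I can assign pairwise-distinct real heights to the vertices of $V_G$ with every oriented edge pointing from the lower to the higher endpoint; placing each energy vertex at the midpoint height of the edge it subdivides (perturbing generically so that all $|\tilde{V}|$ heights are distinct) and sweeping a horizontal plane upward realizes the bubbling of \Sec{sec:bubbling} in which vertices are swallowed in order of increasing height. By construction, the orientation this bubbling induces on $G$ is exactly the given directed acyclic graph. Because the heights are distinct every edge is non-horizontal, so when an energy vertex is reached exactly one of its two neighbors lies below the plane and one above, and it is swallowed $1\to 1$; its swallowing operator is then the $q\times q$ matrix $\big(e^{-\beta h_e(\sigma_i,\sigma_j)}\big)_{\sigma_i,\sigma_j}$ of norm $\norm{e^{-\beta h_e}}$.

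Next I would read off $b$. An original vertex is swallowed $0\to n$ precisely when none of its neighbors has yet been swallowed, i.e.\ when every incident edge is directed away from it, and $n\to 0$ precisely when all its neighbors have already been swallowed, i.e.\ when every incident edge is directed into it; these are exactly the source and sink vertices, that is, the extreme vertices of the directed acyclic graph. Hence exactly $b$ original vertices are swallowed in a $0\to n$ or $n\to 0$ fashion, each contributing norm at most $q^{1/2}$, while every remaining original vertex is swallowed $n\to m$ with $n,m>0$ and contributes norm $1$. Multiplying the per-vertex norms gives
\begin{align*}
  \Delta=\prod_{v\in V_G}\norm{\CO_v}\cdot\prod_{v\in V_\epsilon}\norm{\CO_v}\le q^{b/2}\prod_{e\in E}\norm{e^{-\beta h_e}}.
\end{align*}
Since $\tilde{G}$ has bounded degree (the energy vertices have degree $2$ and the original vertices keep their $G$-degree), \Thm{thm:main} then supplies an efficient quantum algorithm approximating $Z_G(\beta)=T(\tilde{G},M)$ to this scale.

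For a finite lattice I would simply exhibit a convenient orientation: tilt the lattice so that one vertex is strictly lowest and one strictly highest, and orient every edge from its lower to its higher endpoint. Only these two opposite corners then have all incident edges pointing the same way, so $b=2$ and $\Delta\le q\prod_{e\in E}\norm{e^{-\beta h_e}}$. The main obstacle is not any estimate — all the relevant norms are already in hand — but the combinatorial bookkeeping of the correspondence between the directed acyclic graph and the sweeping-plane bubbling, and in particular verifying that the extreme-vertex count is exactly the exponent of $q$ that \Thm{thm:main} produces.
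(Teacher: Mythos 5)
Your construction is essentially the paper's own argument: the paper likewise realizes the bubbling by embedding $G$ in $\BBR^3$ with all vertices at distinct heights and sweeping a horizontal plane upward, computes the same per-vertex norms ($q^{1/2}$ for each of the $b$ original vertices swallowed $0\to n$ or $n\to 0$, norm $1$ for the other original vertices, and $\norm{e^{-\beta h_e}}$ for each energy vertex swallowed $1\to 1$), and identifies $b$ with the number of extreme points of the acyclic orientation induced by the sweep. Your explicit passage from a given acyclic orientation to a height function via a linear extension is a touch more careful than the paper, which phrases the correspondence in the opposite direction (a bubbling induces an orientation), but the content is the same, as is the tilted-lattice argument for $b=2$.

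The one genuine gap is your closing claim that ``$\tilde{G}$ has bounded degree'': the original vertices of $\tilde{G}$ retain their degree in $G$, and the corollary places no bound on the degree of $G$. Since Theorem~\ref{thm:main} runs in time $|V|\cdot\epsilon^{-2}\cdot\poly(q^d)$ with $d$ the maximal degree, invoking it directly on a graph containing, say, a vertex of degree $|V|-1$ does not give an efficient algorithm. The paper closes this hole by noting that the tensors sitting at the original vertices are identity tensors and hence \emph{reducible}: a degree-$k$ identity vertex can be replaced by a bounded-degree tree of identity tensors without changing the value of the network, and the tree can be bubbled so that its total norm contribution remains $1$ when the original vertex would have been swallowed $n\to m$ with $n,m>0$, and $q^{1/2}$ when it would have been an extreme vertex. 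With that reduction in place, the rest of your argument goes through verbatim.
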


\begin{figure}
  \center 
  \includegraphics[scale=1]{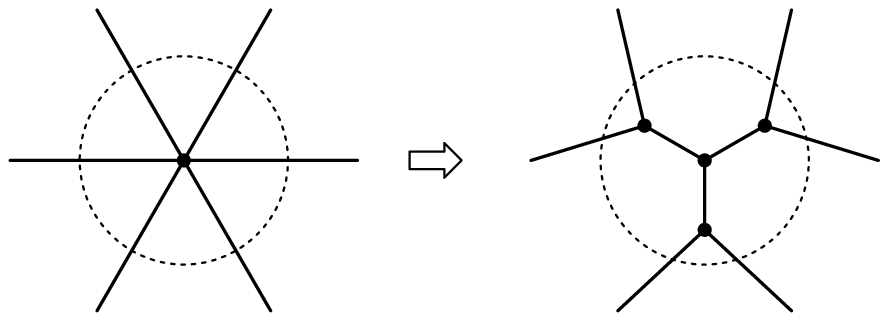}
  \caption{Reducing the degree of a vertex by replacing it with a
  low-degree tree. This reduction is possible for simple vertices
  such as the identity vertex and the cycle vertices in
  \Sec{sec:difference}.
  \label{fig:deg-reduction}}
\end{figure}

We  note that:
\begin{enumerate}

  \item\label{pg:deg-reduction} In the corollary we did not restrict the 
    underlying graph $G$ to have a bounded degree. The reason is
    that the identity tensors are \emph{reducible} in the following
    sense: every vertex of degree $k$ that represents an identity
    tensor can be locally replaced by a tree graph with bounded
    degree (say, $d=3$) as described in \Fig{fig:deg-reduction}. The
    new vertices of the tree are defined to be identity tensors as
    well, and so by the connectivity of the tree, the only
    non-vanishing coloring of the external edges is the one where
    they are all colored identically. Moreover, in such case, their
    overall weight is 1 as required. 
    
    This reduction does not affect the approximation scale,
    for if we swallow the identity tensor in an $m\to n$ manner with
    $m>0, n>0$ then we can also swallow the tree graph such no
    vertex is swallowed in a $0\to \ell$ or $\ell\to 0$ manner.
    Therefore the norm of all the identity tensors is $1$, hence
    their product is also $1$ - in agreement with the original
    scale. If, on the other hand, we swallow it in a $0\to k$ or
    $k\to 0$ manner, then we can swallow the tree graph such that
    one vertex is swallowed in a $0\to \ell$ manner while the rest
    are not, thereby yielding the overall approximation scale
    $q^{1/2}$ as required.

  \item The above algorithm works for any
        temperature $\beta$ and any coupling energies $h_e(\cdot,
        \cdot)$ - not necessarily physical ones (e.g., they can be
        complex).
        
 \end{enumerate}

Shortly after the first release of this paper, Van den Nest et al
\cite{ref:Nes08} published an interesting paper with closely related
results. They gave quantum algorithms for several classical
statistical-mechanics models and also provided some complementary
hardness results. The models considered were on two-dimensional
grids with restrictions on the form of the coupling energies.
Specifically, they considered two types of statistical models:
vertex models and edge models \cite{ref:Bax08}.  Edge models are
essentially $q$-state models, in which the particles sit on the
vertices of a graph and the edges model the interactions between
them. These include the Ising model, which was the actual model of
this type that the authors studied.  In the vertex models, the
classical particles sit on the edges of the grid, and the vertices
model many-particles interactions.

These algorithms fit nicely in the tensor-network formalism that we
presented here. Indeed, a close inspection reveals that they can be
considered as a bubbling of a two-dimensional network on a grid, and
therefore, as a special case of \Thm{thm:main}. Moreover, in the
Ising model case, the functional form of the tensors they use is
identical to the one used here. In addition, in all cases the
bubbling is along one dimension of the grid (say, from left to
right), and the couplings in the model are restricted so that 
swallowing operators are unitary (thereby directly implementable on
a quantum computer). Finally, the authors assumed that configuration
of the particles on the left and right boundaries of the grid is
fixed. Consequently, the approximation scale in both models is of
order one.

To prove the \BQP-hardness of these approximations,
the authors showed that in both models, one can choose the couplings
such that the resulting unitary operators form a universal set of
gates for quantum computation. We can use their hardness result
with respect to the Ising model to prove the following Corollary: 
\begin{corollary}[\Ref{ref:Nes08}]
  There exist classes of general $q$-state models for which the
  approximation achieved in Corollary $\ref{col:q-state}$ is
  BQP-hard.
\end{corollary}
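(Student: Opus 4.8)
The plan is to invoke the hardness result of Van den Nest et al.~\cite{ref:Nes08} for the two-dimensional Ising model and reinterpret it inside the tensor-network formalism of this paper. First I would recall that the general $q$-state model of \Def{def:q-state-network} specializes to the Ising model by taking $q=2$ and choosing the edge tensor $(\B{M}_v)_{\sigma_i,\sigma_j}=e^{-\beta h_{ij}(\sigma_i,\sigma_j)}$ with $h_{ij}(\sigma_i,\sigma_j)=-J_{ij}\sigma_i\sigma_j$; allowing the couplings $J_{ij}$ (and $\beta$) to be arbitrary complex numbers is legitimate, since remark 2 following Corollary~\ref{col:q-state} explicitly permits non-physical coupling energies. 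Thus the family of tensor networks built from two-dimensional grid Ising models is a genuine subfamily of the general $q$-state networks to which Corollary~\ref{col:q-state} applies.

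Next I would establish the dictionary between the two settings. Van den Nest et al.\ consider the Ising partition function on a grid with fixed left/right boundary configurations, and they show that for a suitable choice of (complex) couplings the left-to-right transfer operators become a universal set of one- and two-qubit unitary gates. In the language of \Sec{sec:bubbling}, their left-to-right sweep is precisely a bubbling of the grid network $\tilde{G}$ in which each energy vertex is swallowed in a $1\to 1$ fashion; because the couplings are chosen so that every swallowing operator $\CO_v$ is unitary, we have $\norm{\CO_v}=1$, and with the boundaries fixed the two boundary vertices contribute only a bounded factor. Hence the approximation scale $\Delta=\prod_{v}\norm{\CO_v}$ of Theorem~\ref{thm:main}, which is the scale appearing in Corollary~\ref{col:q-state}, is of order one for these instances.

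With the scale pinned down, I would close the argument by the standard completeness reduction already used in \Sec{sec:QC}. Their universality means that any polynomial-size quantum circuit $U$ can be encoded as such an Ising network, with $T(\tilde{G},M)$ equal to an amplitude of the form $\bra{0^{\otimes n}}U\ket{0^{\otimes n}}$ (up to the bounded boundary normalization). Since $\Delta=\CO(1)$, distinguishing the \textsf{BQP} YES/NO cases $p_0\ge 2/3$ versus $p_0\le 1/3$ reduces to the additive approximation of $T(\tilde{G},M)$ to scale $\Delta$, exactly the output promised by Corollary~\ref{col:q-state}. Therefore approximating the $q$-state partition function to the scale of Corollary~\ref{col:q-state} is \textsf{BQP}-hard for this subfamily, which is the claim.

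The main obstacle I anticipate is bookkeeping rather than conceptual: one must verify carefully that the bubbling induced by the left-to-right transfer-matrix sweep really does produce the swallowing operators of \Def{def:swallow-op} acting on the right Hilbert spaces, so that the unitarity of the transfer operators in \cite{ref:Nes08} translates into $\norm{\CO_v}=1$ here. In particular I would check that fixing the boundary colorings corresponds to the rank-$1$ ``$\ket{0}$''-type tensors at the boundary, contributing only constant norm factors, so that no hidden exponential factor sneaks into $\Delta$; granting that, the corollary follows immediately from \cite{ref:Nes08} together with Corollary~\ref{col:q-state}.
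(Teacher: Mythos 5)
Your overall strategy is the same as the paper's: import the universality result of \Ref{ref:Nes08} for the grid Ising model, observe that the left-to-right transfer-matrix sweep is a bubbling with unitary swallowing operators, conclude $\Delta=\CO(1)$, and reduce a \BQP\ decision to the additive approximation. The one place where your argument as written does not go through is exactly the point the paper spends most of its proof on: the fixed left/right boundary configurations. You propose to model them by rank-$1$ ``$\ket{0}$''-type tensors at the boundary vertices, but such a network is \emph{not} an instance of the general $q$-state model of \Def{def:q-state-network} --- that construction forces every original vertex of $G$ to carry an identity tensor and every edge midpoint to carry $e^{-\beta h_e}$, with the partition function summing over \emph{all} colorings of \emph{all} vertices. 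Corollary~\ref{col:q-state} is a statement about that class, so the hardness instance must actually lie in it; a grid with projector tensors glued onto the boundary does not, and the paper explicitly flags this incompatibility.

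The paper's fix is a small gadget: add one auxiliary vertex on each side of the grid, connect it by new edges to every boundary vertex on that side, and choose the coupling energies of the new edges so that $e^{-\beta h(\sigma_i,\sigma_j)}$ equals $1$ when $\sigma_i=\sigma_j=0$ and $0$ otherwise. The resulting object is a legitimate general $q$-state model (with admittedly non-physical couplings, which remark~2 after Corollary~\ref{col:q-state} permits), its partition function equals the fixed-boundary value of \Ref{ref:Nes08}, and under the left-to-right bubbling only the two auxiliary vertices are extreme, so $b=2$ and the new edges contribute norm $1$, keeping $\Delta$ constant. If you replace your boundary step with this gadget (or an equivalent one), the rest of your argument is correct and coincides with the paper's.
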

\begin{proof}
  As mentioned above, in the Ising case, \Ref{ref:Nes08} uses the
  same tensor-network as in \Def{def:q-state-network} and places it
  on a two-dimensional grid that is swallowed form left to right.
  The only difference is that they allow the use of fixed boundary
  conditions. This means that the identity tensors on the right
  and left boundaries are replaced by tensors that allow only one
  configuration out the possible $q$. Consequently, their network is
  incompatible with our general $q$-state model construction.
    
  Nevertheless, we can turn their fixed boundary Ising model into a
  general $q$-state model of the type considered here by
  introducing two additional vertices, one to the left and one to
  the right of the lattice.  Connecting each of these to the
  boundary, we can impose the fixed boundary conditions by an
  appropriate choice of couplings of the new edges. For example,
  assume that the additional vertex is indexed by $i$ and that $j$
  sits on the boundary. If we want to fix the color of boundary
  $j$ to 0, we define the interaction between these particles
  by $e^{-\beta h(\sigma_i, \sigma_j)}$ that is $1$ for
  $\sigma_i=\sigma_j=0$ and $0$ otherwise. This tensor-network is
  therefore no-longer a ``pure'' Ising model, but it is still a
  general $q$-state model. In addition, its value is identical to
  the value of the tensor network in \Ref{ref:Nes08}.
  
  Swallowing this network from left to right, we notice that $b=2$,
  and that the norm of the newly introduced edges is exactly 1.
  Since the rest of the energy vertices in the network produce
  unitary operators with a unit norm, Corollary $\ref{col:q-state}$
  achieves a constant approximation scale $\Delta$. By
  \Ref{ref:Nes08}, this network evaluates the result of a universal
  quantum computation, and therefore the approximation is \BQP-hard.
\end{proof}

\subsection{Tensor networks for the Difference Models.} \label{sec:difference}

In this section we restrict our attention to difference models.  As we noted earlier, the Ising, Clock, and Potts
models are all examples of difference models.  For these models, we
can define an alternatative tensor network, that together with a
suitable bubbling, evaluates the partition
function with a better approximation scale than the general
case for certain choices of couplings. The idea is to use the redundancy of the coupling energy to
define the tensor network on a smaller Hilbert space, which 
leads to smaller norms, hence, a better approximation scale.

We begin by turning $G$ into a directed graph
by placing an arrow on every edge of $G$. The direction of the arrow
is unimportant. Then for
every edge $e=(i,j)$ with an arrow going from
$j$ to $i$, we define a variable $\delta_{i,j}\EqDef (\sigma_i -
\sigma_j) \bmod q $. $\delta_{i,j}$
takes on the values $0,1, \ldots, q-1$. By assumption, the coupling
energy $h_{i,j}(\sigma_i, \sigma_j)$ depends only on that variable.

We would like to write the partition function as a sum over
all possible labeling of the delta variables. However, these
variables are not independent: every cycle in the graph (a cyclic
sequence of adjacent vertices) yields a \emph{consistency
constraint} on the variables that are associated with its edges;
their appropriate sum or difference must yield 0. For example, in
\Fig{fig:cycles}, there are 3 cycles: 
\begin{itemize}
  \item $1\to 2\to 3\to 1$: corresponds to $\delta_{2,1} +
    \delta_{3,2} + \delta_{1,3} = 0$.
  \item $1\to 4\to 3\to 1$: corresponds to $\delta_{4,1} -
    \delta_{4,3} + \delta_{1,3} = 0$.
  \item $1\to 2\to 3\to 4\to 1$: corresponds to $\delta_{2,1} +
    \delta_{3,2} + \delta_{4,3} - \delta_{4,1} = 0$.
\end{itemize}
\begin{figure}
  \center \includegraphics[scale=1]{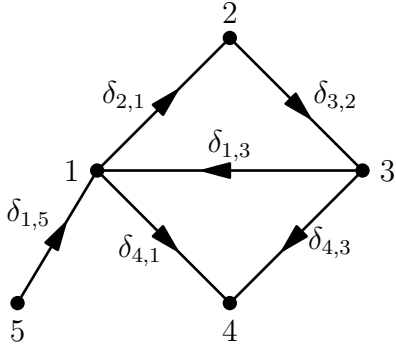} \caption{The cycles
  in a graph and their relation with the constraints over the
  $\{\delta_{i,j}\}$ variables. In this example, there are 3 cycles
  $1\to 2\to 3$, $1\to 4\to 3\to 1$, and $1\to 2\to 3\to 4\to 1$ -
  but only 2 of them are independent. They are translated into 3
  consistency equations between the $\{\delta_{i,j}\}$ variables
  (with only two being independent) as described in the text.
  \label{fig:cycles}}
\end{figure}
Not all these equations are independent; the third equation is the
difference of the first two, just as the first two cycles can be
joined to obtain the third. We should therefore limit our attention
to a set of \emph{independent cycles} that correspond to a set of
independent equations. One possible way to obtain such a set is the
following: start with a spanning tree of the graph. As $G$ is
connected, the spanning tree must have $|V|-1$ edges. Now add the
remaining $|E|-|V|+1$ edges. Every such edge $e=(i,j)$ creates a new
cycle that only uses $e$ and part of the original spanning tree
(since the $i$ and $j$ vertices were already connected by the
spanning tree). Moreover, this cycle is independent of the other
cycles because none of them contain the edge $(i,j)$. This way we
construct a set of $|E|-|V|+1$ independent cycles, which we denote
by $\mathcal{C} = \{ C_i\}$. The number of independent $\delta$
variables is therefore $|V|-1$, one for each edge of the original
spanning tree. Finally, it is easy to verify that this procedure can
be done efficiently. 

The following lemma shows that we can sum over all labeling of
$\{\delta_{i,j}\}$ that satisfy $\mathcal{C}$ in order to obtain the
partition function.
\begin{lemma}
\label{lem:delta-coloring} 

  Let $G=(V,E)$ be a directed graph on which a $q$-state difference
  model is defined with the coupling functions
  $h_{i,j}(\delta_{i,j})$. Then its partition function is given by
  the following sum over labeling of the delta variables, which
  satisfy the consistency constraints:
  \begin{align}
    Z_G(\beta) = q\sum_{\substack{\text{consist'}\\ \text{labeling}}}
      \prod_{(i,j)\in E}e^{-\beta h_{i,j}(\delta_{i,j})} \ .
  \end{align}
\end{lemma}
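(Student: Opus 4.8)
The plan is to prove the identity by a change of variables in the sum \Eq{def:Z} defining the partition function, passing from the vertex colorings $\sigma$ to the edge differences $\delta_{i,j}$. Since the model is a difference model, the Boltzmann weight of each edge depends on $\sigma$ only through $\delta_{i,j}=(\sigma_i-\sigma_j)\bmod q$, so the entire summand $e^{-\beta\mathcal{H}(\sigma)}=\prod_{(i,j)\in E}e^{-\beta h_{i,j}(\delta_{i,j})}$ is a function of the tuple $(\delta_{i,j})_{(i,j)\in E}$ alone. The whole content of the lemma is therefore to understand the fibers and the image of the map $\Phi:\sigma\mapsto(\delta_{i,j})$.

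First I would observe that $\Phi$ is a homomorphism of finite abelian groups from $\mathbbm{Z}_q^{V}$ to $\mathbbm{Z}_q^{E}$, since each $\delta_{i,j}$ is $\mathbbm{Z}_q$-linear in $\sigma$. I would then identify its kernel: $\Phi(\sigma)=0$ forces $\sigma_i=\sigma_j$ across every edge, and since $G$ is connected this means $\sigma$ is constant; hence $\ker\Phi$ is exactly the $q$ global colorings $\sigma\equiv c$, $c\in\mathbbm{Z}_q$. Consequently every fiber of $\Phi$ over a point of its image has size exactly $q$, namely the orbit of any preimage under the global shift $\sigma\mapsto\sigma+c$.

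Next I would characterize the image of $\Phi$ as precisely the set of consistent labelings. One inclusion is immediate: for any coloring $\sigma$ and any cycle, the signed sum of the $\delta$'s telescopes to $0\bmod q$, so $\Phi(\sigma)$ satisfies every cycle constraint, in particular those of the chosen independent set $\mathcal{C}$. For the reverse inclusion I would use the spanning tree already introduced in the text: given a consistent labeling, assign $\sigma$ an arbitrary value at a root vertex and propagate the prescribed differences along the unique tree paths to define $\sigma_i$ at every vertex. This is well defined because the tree has no cycles, and consistency guarantees that the $\delta$ on each of the remaining $|E|-|V|+1$ non-tree edges agrees with $\sigma_i-\sigma_j$, since each such edge closes exactly one fundamental cycle whose constraint is one of the equations in $\mathcal{C}$. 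Hence the reconstructed $\sigma$ satisfies $\Phi(\sigma)=(\delta_{i,j})$, so the labeling lies in the image.

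Combining these facts yields the lemma: grouping the sum \Eq{def:Z} according to the value of $\Phi$, each consistent labeling is hit by exactly $q$ colorings that all share the same Boltzmann weight, so
\begin{align*}
  Z_G(\beta)=\sum_\sigma\prod_{(i,j)\in E}e^{-\beta h_{i,j}(\delta_{i,j})}
  = q\sum_{\substack{\text{consist'}\\\text{labeling}}}\prod_{(i,j)\in E}e^{-\beta h_{i,j}(\delta_{i,j})}\ .
\end{align*}
The main obstacle is the reconstruction step, establishing that consistency on the independent cycles $\mathcal{C}$ already forces the tree-propagated coloring to reproduce \emph{all} of the prescribed differences; the key point I would emphasize is that each non-tree edge lies on exactly one fundamental cycle of the spanning tree, so the $|E|-|V|+1$ equations of $\mathcal{C}$ are exactly what is needed, and no further cycle constraints must be checked.
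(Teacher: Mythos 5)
Your proof is correct and follows essentially the same route as the paper's: both establish a $q$-to-$1$ correspondence between vertex colorings and consistent $\delta$-labelings by fixing the color of one vertex and propagating the prescribed differences outward. Your version merely repackages this in group-homomorphism/kernel/fiber language and is slightly more careful about which constraints are invoked (propagating along unique tree paths and checking only the $|E|-|V|+1$ fundamental-cycle equations of $\mathcal{C}$, rather than the paper's arbitrary-path argument that appeals to general cycle constraints), but the substance is identical.
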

\begin{proof}
  We show that every consistent labeling of the $\{\delta_{i,j}\}$
  variables comes from exactly $q$ different colorings of the
  vertices of $G$.  Given a consistent labeling of the
  $\{\delta_{i,j}\}$ variables, pick a vertex $v_0\in V$ and assign
  to it a color $\sigma_{v_0}\in 0,1,\ldots, q-1$. Then define the
  color of every other vertex $v\in V$ by following a path from
  $v_0$ to $v$ and subtracting/adding the right $\delta_{i,j}$
  variables along the path. The coloring of $v$ is independent of
  the actual path, for otherwise two paths that produce different
  coloring of $v$ would create a cycle whose appropriate summation
  of the $\delta$ variables would not vanish. We have therefore used
  the labeling of $\{\delta_{i,j}\}$ to define a coloring $\sigma$
  of the vertices of $G$.  As the coloring we have just defined
  depends on the initial assignment of $v_0$, there are at least $q$
  labelings that produce $\{\delta_{i,j}\}$.
  
  These are also the only colorings that do that. Indeed, consider
  some coloring of the vertices. This coloring has some assignment
  to the vertex $v_0$ that corresponds to one of the $q$ colorings 
  that we have defined. By following the path from $v_0$ to any
  other vertex $v\in V$, it is easy to see that the two colorings
  must agree on all vertices and not only on $v_0$.
\end{proof}

With this result we are in a position to define a tensor network
that is based on the $\delta$ variables. The first step is to define
the graph $G_\delta$ on which the network is defined

\begin{deff}[The graph $G_\delta$]
  
  The graph $G_\delta=(V_\delta, E_\delta)$ is constructed from a
  graph $G=(V,E)$ as follows.  Pick a spanning tree for $G$.  Denote
  the set of edges of the spanning
  tree by $E_{\text{tr}}$, and let $E_{\text{cycle}}\EqDef E
  \backslash E_{\text{tr}}$.  As previously discussed, the choice
  of a
  spanning tree produces a set of independent cycles $\mathcal{C} =
  \{ C_e \}_{e \in E_{\text{cycle}}}$ ($C_e$ involves $e$ and some
  subset of the edges of $E_{\text{tr}}$). Then we construct
  $G_\delta$ in 4 steps, which are also illustrated in
  \Fig{fig:delta-tnet}.
  \begin{description}
    \item[(a)] We embed $G$ in three dimensions and identify its cycles. We
      will use its edges and vertices only as a guide for the
      construction of $G_\delta$.
      
    \item [(b)] We place a vertex in the middle of every edge
      of $G$; vertices that correspond to $e\in E_{\text{tr}}$ are
      called \emph{tree vertices}, and those that correspond to
      $e\in E_{\text{cycle}}$ are called \emph{cycle vertices}.
      Above every such vertex we create another vertex, which is
      called an \emph{energy vertex}, and connect the two vertices with
      an edge.
      
    \item [(c)] For every cycle $C_e$, we connect the cycle vertex 
      corresponding to $e$ with all tree vertices associated to
      edges in $E_{\text{tr}}$ involved in the cycle $C_e$.
          
    \item [(d)] We put a vertex in the middle of every edge that
      connects a tree vertex to its energy vertex. These are called
      \emph{mid vertices}. For every cycle $C_e$ we connect the
      cycle vertex associated to $e$ with all mid vertices
      associated to edges in $E_{\text{tr}}$ involved in the cycle
      $C_e$.
  \end{description}
\end{deff}

We now describe a tensor network on $G_{\delta}$ that will
evaluate the partition function. The network has the following
tensors:
\begin{itemize}  
  \item \emph{Tree vertices and mid vertices}.  These are identity 
    tensors: they are zero unless all the colors of their edges are
    equal, in which case they are 1.
    
\item \emph{Energy vertices}.  These vertices have only one edge,
    which is associated with a $\delta$ variable. Their definition
    is $\B{M}_v(\delta)=e^{-\beta h_e(\delta)}$, with
    $h_e(\cdot)$ being the corresponding energy function of the
    edge in the parent graph $G$.
    
\item \emph{Cycle vertices}.  Recall that the edges of a cycle 
  vertex come in pairs that correspond to the tree vertices in that
  cycle: one edge connects the cycle vertex to the tree vertex and
  the other connects the cycle vertex to the associated mid vertex.
  
  The tensor shall be zero unless the labels of each pair are equal.
  When they are equal, we interpret the label of each pair to be the
  $\delta$ value of the underlying tree edge (of the original
  graph). In addition, we interpret the label of the edge that
  connects the energy vertex as the $\delta$ variable of underlying
  cycle edge. When all these labelings satisfy the consistency
  equation of the cycle, the tensor is $1$. Otherwise it is zero.
  
\end{itemize}

\begin{figure}
  \center \includegraphics[scale=0.7]{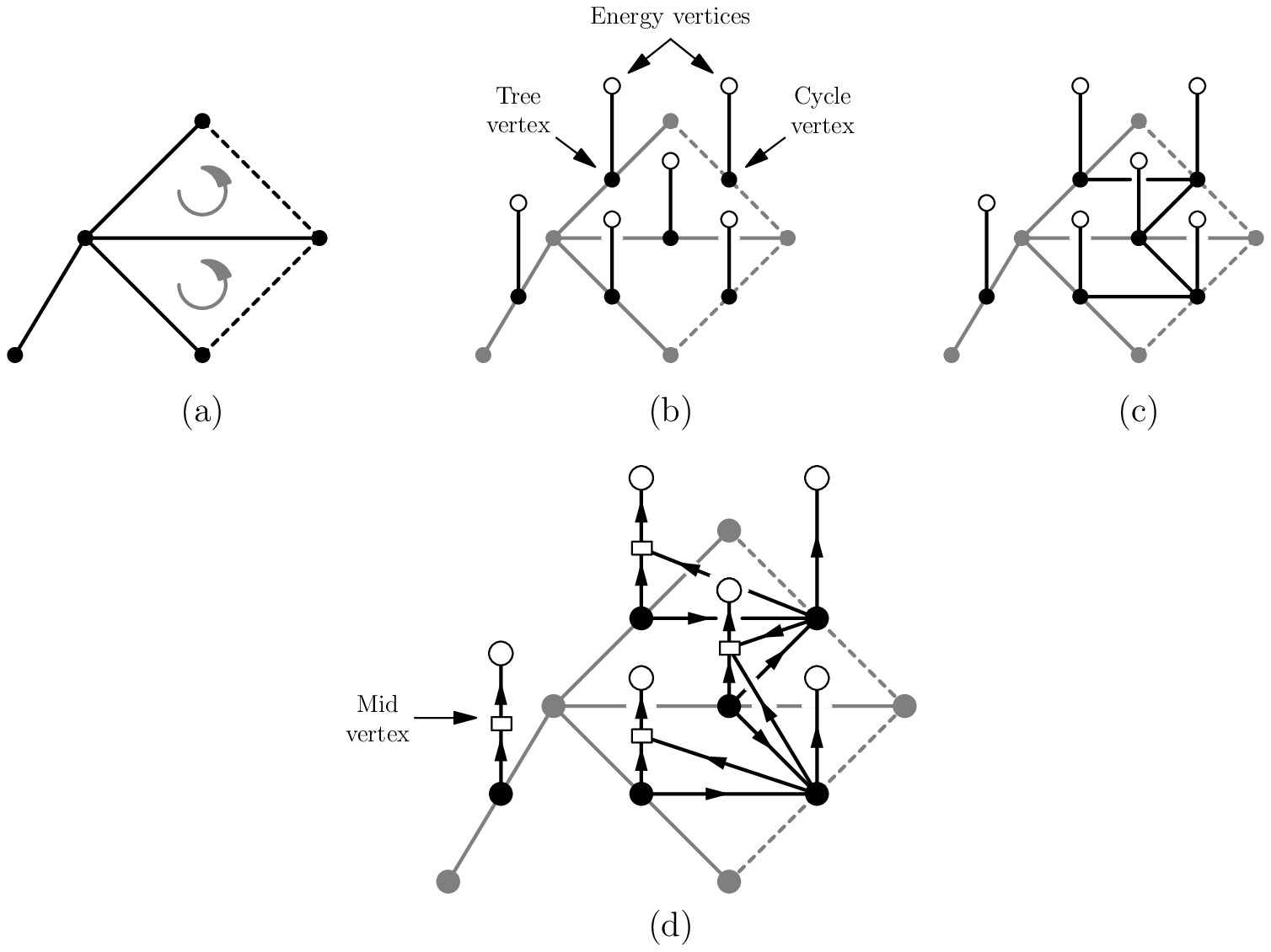}
  \caption{Constructing $G_\delta=(V_\delta, E_\delta)$ from
  $G=(E,V)$. (a) We begin with a connected graph $G=(E,V)$ with a
  spanning tree that is denoted by the solid edges. $G$ contains
  two independent cycles, denoted by arrows. (b) We place a tree
  vertex in the middle of every tree edge of $G$ and a cycle vertex in the
  middle of every cycle edge (black filled vertices). We connect
  them to energy vertices (unfilled vertices), which are placed
  above. (c) We connect the cycle vertices to the tree vertices in
  their cycle. (d) We place a mid vertex (unfilled square) in the
  middle of every edge that connects a tree vertex to its energy
  vertex. Finally we connect the cycle vertices to the mid vertices
  in their cycle. The arrows on edges denote the bubbling order of
  $G_\delta$. \label{fig:delta-tnet}}
\end{figure}

The following lemma shows that the above tensor network evaluates
the partition function.
\begin{lemma}
  The tensor-network that was defined above for the graph
  $G_\delta=(V_\delta, E_\delta)$ evaluates $q^{-1}Z_G(\beta)$, with
  $G=(V,E)$ being the original graph from which $G_\delta$ was
  constructed.
\end{lemma}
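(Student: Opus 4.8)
The plan is to evaluate the network directly through \Eq{eq:alternative}, writing its value as $\sum_l \prod_v \B{M}_v(l)$, and then to identify precisely which labelings $l$ contribute a nonzero term. First I would exploit the identity tensors on the tree and mid vertices to collapse the sum. Since the tree vertex of a tree edge $e$ and its associated mid vertex share an edge and are both identity tensors, every edge incident to either of them must carry one common color; call it $t_e$. In particular the edge feeding that tree edge's energy vertex carries $t_e$, and the two edges a cycle vertex sends into this structure (one to the tree vertex, one to the mid vertex) both carry $t_e$ as well. Consequently the requirement in the cycle-vertex tensor that the labels of each pair be equal is satisfied automatically, and the common pair value is exactly $t_e$.

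Next I would analyze the cycle vertices. The cycle vertex for a cycle edge $e$ has one additional edge going to its own energy vertex; writing $c_e$ for the color on that edge, the energy vertex contributes $e^{-\beta h_e(c_e)}$. By construction the cycle-vertex tensor equals $1$ exactly when the $\delta$-values it reads off, namely the $t_{e'}$ on the tree edges $e'$ in $C_e$ together with $c_e$ on the cycle edge, satisfy the consistency equation of the cycle $C_e$, and equals $0$ otherwise. Because each independent cycle $C_e$ contains its cycle edge $e$ exactly once, this equation determines $c_e$ uniquely from the family $\{t_{e'}\}$. Hence the surviving labelings are in bijection with the free assignments of $\{t_e\}_{e\in E_{\text{tr}}}$, which are precisely the consistent labelings of the $\delta$ variables described just before \Lem{lem:delta-coloring}.

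For each surviving labeling the product of tensor values is immediate: every tree vertex, mid vertex, and cycle vertex contributes $1$; the energy vertex over a tree edge $e$ contributes $e^{-\beta h_e(t_e)}$; and the energy vertex over a cycle edge $e$ contributes $e^{-\beta h_e(c_e)}$. Gathering the two kinds of energy factors and writing $\delta_e$ for the common value ($t_e$ on tree edges, $c_e$ on cycle edges) gives $\prod_{e\in E} e^{-\beta h_e(\delta_e)}$, exactly the summand appearing in \Lem{lem:delta-coloring}. Summing over all consistent labelings and invoking that lemma then yields $T(G_\delta,M)=\sum_{\text{consist.}}\prod_{e\in E} e^{-\beta h_e(\delta_e)} = q^{-1}Z_G(\beta)$, as claimed.

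The step I expect to demand the most care is the bookkeeping inside this bijection: I must verify that the color forced onto each tree edge by the identity tensors genuinely coincides with the intended $\delta$ variable of that edge, and that the orientation-dependent signs in the cycle-vertex consistency equation agree with the sum/difference conventions used to define $\delta_{i,j}=(\sigma_i-\sigma_j)\bmod q$ and the cycle constraints. Once these orientation conventions are pinned down, the remainder is a routine matching of factors between the surviving terms of the network and the terms of \Lem{lem:delta-coloring}.
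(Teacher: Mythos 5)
Your proposal is correct and follows essentially the same route as the paper's proof: use the identity tensors on the tree and mid vertices to force a single color per tree edge, observe that the cycle-vertex tensors then uniquely determine the remaining labels via the cycle constraints, establish the bijection between non-vanishing labelings of $G_\delta$ and consistent $\delta$-labelings, and invoke Lemma~\ref{lem:delta-coloring}. You additionally spell out the factor-matching that the paper leaves to the reader, which is a welcome completion rather than a deviation.
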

\begin{proof}
  According to Lemma~\ref{lem:delta-coloring}, it is enough to show
  that the tensor network gives
  \begin{align}
    \label{eq:del-sum}
    \sum_{\substack{\text{consist'}\\ \text{labeling}}}
      \prod_{(i,j)\in E}e^{-\beta h_{i,j}(\delta_{i,j})} \ ,
  \end{align}
  where the sum is over all labelings of the delta variables
  $\{\delta_{ij}\}$ that satisfy the consistency constraints. 
  
  The tensors of the tree vertices are identity tensors. Therefore
  in a non-vanishing labeling of $G_\delta$, all edges that connect
  to a tree vertex must have the same labeling. This uniquely
  defines a labeling of the tree vertices. The converse is also
  true: any labeling of the tree vertices uniquely defines a
  non-vanishing labeling of $G_\delta$. Indeed, given a labeling of
  the tree vertices, we first label all their
  incident edges. This way every mid vertex has exactly one labeled
  edge, which determines the labeling of the rest of the edges. The
  only edges which are left unlabeled are the edges that connect the
  cycle vertices to their energy vertices. Their labeling is
  uniquely determined by the cycle constraint as manifested by the
  tensors of the cycle vertices.
  
  Now every consistent labeling of the $\{\delta_{i,j}\}$ variables
  in $G$ is uniquely determined by a labeling $\{\delta_{i,j}\}$ of
  the tree edges, which is equivalent to a labeling of the tree
  vertices. Therefore every consistent labeling of
  $\{\delta_{i,j}\}$ corresponds to a non-vanishing labeling of
  $G_\delta$ a vice-versa.  We leave it to the reader to verify that
  that for these non-zero terms, the value of the network is exactly
  \begin{align*}
    \prod_{(i,j)\in E}e^{-\beta h_{i,j}(\delta_{i,j})} \ .
  \end{align*}
\end{proof}

Let us now analyze the approximation scale of this tensor network in
a simple bubbling that is described in Fig.~\ref{fig:delta-tnet}
(d).  We place the 4 types of vertices on 4 different horizontal
planes: all the tree vertices are put on the lowest plane. In the
plane above we place the cycle vertices, followed by the mid
vertices and finally the energy vertices.  By the definition of
$G_{\delta} $, all edges are inter planar. Therefore by bubbling the
graph from bottom to top, we have 4 types of norms:
\begin{itemize}
  \item Tree vertices: the bubbling here is $0\to n$.
    The tensor is an identity tensor and therefore the norm is $q^{1/2}$.
    
  \item Cycle vertices: the bubbling here is $n \rightarrow n+1$,
    where the first $n$ connect to the tree vertices and the second
    $n+1$ connect to the corresponding mid vertices as well as to
    the energy vertex. The labeling of the input edges uniquely
    determines the labeling of the output edges with a weight of
    unity. Therefore the norm is $1$.
  
  \item Mid vertices:  the bubbling here is $n\rightarrow 1$ for these 
    identity tensors which yields a norm 1.
  
  \item Energy vertices: bubbling here is from $1\to 0$. It is easy
    to see that the norm here is $\left(\sum_{j=0}^{q-1}|e^{-\beta
    h(j)}|^2\right)^{1/2}$.
\end{itemize}
Multiplying these norms together and using the fact that there are
exactly $|V|-1$ tree vertices, we arrive at the following
corollary:

\begin{corollary}
  \label{col:delta0}
  The above tensor-network and bubbling yields an additive
  approximation to a $q$-state difference model with the
  approximation scale
  \begin{align} 
  \label{eq:delta-scale}
    \Delta = q^{(|V|+1)/2} \prod_{e\in E}\left(
      \sum_{j=0}^{q-1}|e^{-\beta h_e(j)}|^2\right)^{1/2} \ .
  \end{align}      
\end{corollary}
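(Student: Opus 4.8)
The plan is to read this corollary as a direct application of \Thm{thm:main} to the specific tensor network on $G_\delta$, once the four norm computations just carried out above are assembled and the rescaling forced by the preceding lemma is taken into account. Recall that \Thm{thm:main} produces an additive approximation of the \emph{value} of a tensor network with approximation scale $\prod_{v}\norm{\CO_v}$, the product ranging over all swallowing operators induced by the chosen bubbling. Since the preceding lemma shows that this particular network evaluates $q^{-1}Z_G(\beta)$ rather than $Z_G(\beta)$ itself, I would first obtain an approximation of $q^{-1}Z_G(\beta)$ and then multiply the algorithm's output by $q$; by the scaling built into \Def{def:adaprox}, multiplying the output by $q$ multiplies the approximation scale by $q$ as well (exactly as in the final ``multiply by $\Delta$'' step of the proof of \Thm{thm:main}).

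Next I would collect the four norm contributions from the bulleted computation above. The spanning tree of the connected graph $G$ has $|V|-1$ edges, so there are exactly $|V|-1$ tree vertices, each swallowed $0\to n$ with norm $q^{1/2}$; their product contributes $q^{(|V|-1)/2}$. The cycle vertices (swallowed $n\to n+1$) and the mid vertices (swallowed $n\to 1$) each have norm $1$, hence contribute a factor of $1$. Finally there is one energy vertex for each edge of $G$, each swallowed $1\to 0$ with norm $\left(\sum_{j=0}^{q-1}|e^{-\beta h_e(j)}|^2\right)^{1/2}$, so together they contribute $\prod_{e\in E}\left(\sum_{j=0}^{q-1}|e^{-\beta h_e(j)}|^2\right)^{1/2}$.

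Multiplying these together yields $\prod_{v}\norm{\CO_v}=q^{(|V|-1)/2}\prod_{e\in E}\left(\sum_{j=0}^{q-1}|e^{-\beta h_e(j)}|^2\right)^{1/2}$ as the scale for approximating $q^{-1}Z_G(\beta)$. Applying the factor-$q$ rescaling from the first step then gives $\Delta = q^{(|V|+1)/2}\prod_{e\in E}\left(\sum_{j=0}^{q-1}|e^{-\beta h_e(j)}|^2\right)^{1/2}$, which is the claimed scale.

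The only step needing real care, rather than pure bookkeeping, is the interplay of the two sources of powers of $q$: the $q^{(|V|-1)/2}$ coming from the $|V|-1$ identity-type tree vertices, and the single extra factor of $q$ coming from the lemma's $q^{-1}$ correction. It is precisely their combination that promotes the exponent from the naive $(|V|-1)/2$ to $(|V|+1)/2$, so keeping these two contributions separate, and correctly counting $|V|-1$ tree vertices from the spanning tree (not $|E|$), is where an error could most easily slip in.
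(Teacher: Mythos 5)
Your proposal is correct and follows the paper's own argument exactly: multiply the four types of swallowing-operator norms (the $|V|-1$ tree vertices contributing $q^{(|V|-1)/2}$, the cycle and mid vertices contributing $1$, and one energy vertex per edge of $G$ contributing $\bigl(\sum_{j=0}^{q-1}|e^{-\beta h_e(j)}|^2\bigr)^{1/2}$), then multiply the resulting scale by $q$ because the network evaluates $q^{-1}Z_G(\beta)$. The paper makes precisely the same two-source accounting of the powers of $q$ in the remark immediately following the corollary.
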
 
Notice that in the formula above we have a factor $q^{(|V|+1)/2}$
instead of $q^{(|V|-1)/2}$ because our tensor network evaluates
$q^{-1}Z_G$ and therefore we must multiply its approximation scale
by a factor of $q$.

In recent work, Van den Nest et al~\cite{ref:Nes06, ref:Nes07}
revealed an interesting link between the partition function of
classical statistical models and quantum physics. They show how to
express the $q$-state partition functions of the difference models
as inner products between certain graph states and a simple tensor
product state. Since both states can be efficiently generated by a
quantum circuit, their paper contained an implicit quantum algorithm
to approximate these partition functions \cite{ref:Comm07}; analysis
of the additive error produces the identical scale to
\Eq{eq:delta-scale}.  Even more recently, during the time that the
work presented here was being refereed, Van den Nest used this
connection in an interesting work \cite{ref:Nes09} to show that this
scale is classically achievable.  However, the approximation scale
in \Eq{eq:delta-scale} can be improved to a scale for which no
classical simulation result is known.  The improvement that we
describe now serves as an example of the flexibility that the
tensor-network formalism offers.

We start by combining the energy vertices of the cycle vertices into
the cycle vertices. Then the new tensor is non-vanishing if and only
if the original tensor is non-vanishing, only that now the
non-vanishing configurations are given the appropriate energy
weight. The bubbling of this new vertex is in $n\to n$ fashion, and
its norm is $\max _{j} |e^{-\beta h_e (j)}|$. This is strictly
smaller than the combined contribution of the original two vertices,
which is $\left(\sum_{j=0}^{q-1}|e^{-\beta h_e(j)}|^2\right)^{1/2}$.

The second modification is to redistribute the weight of the energy
vertices that correspond to the tree vertices. We split it equally
between the energy vertex and its associated tree vertex: when these new 
vertices have all their edges labeled by the same $j$, their weight 
is $\sqrt{e^{-\beta H_e(j)}}$. Under the
same bubbling as before, the contribution of the two vertices
becomes $\sum_{j=0}^{q-1} |e^{-\beta h_e(j)}|$, which is smaller
than or equal to the previous contribution
$q^{1/2}\left(\sum_{j=0}^{q-1}|e^{-\beta h_e(j)}|^2\right)^{1/2}$.
All in all, the new approximation scale yields the following result:

\begin{corollary}[An efficient quantum algorithm for the difference $q$-state
  model] 
  \label{col:delta} 
  
  Given a difference $q$-state model on a graph $G=(E,V)$, and
  spanning tree $E_{tr} \subset E$, with $E_{cycle} \EqDef
  E\backslash E_{tr}$, there exists an efficient quantum algorithm
  that provides an additive approximation of the partition function
  $Z_G(\beta)$ with the approximation scale

\begin{align}
  \label{eq:delta-scale2}
    \Delta = q \left( \prod_{e\in E_{\text{tr}}}
      \sum_{j=0}^{q-1}|e^{-\beta h_e(j)}|\right)\cdot
      \left(\prod_{e \in E_{\text{cycle}}} 
         \max_{j} |e^{-\beta h_e (j)}| \right) \ .
\end{align}
\end{corollary}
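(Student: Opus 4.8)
The plan is to build directly on the network, bubbling, and scale of Corollary~\ref{col:delta0}, treating the two modifications described in the preceding paragraphs as purely \emph{local} replacements of vertices that leave the value of the network unchanged while lowering the product of swallowing-operator norms. Recall that the unmodified network evaluates $q^{-1}Z_G(\beta)$ and, under the four-plane bubbling of \Fig{fig:delta-tnet}(d), has swallowing norms $q^{1/2}$ for each of the $|V|-1$ tree vertices (swallowed $0\to n$), $1$ for each cycle vertex ($n\to n+1$), $1$ for each mid vertex ($n\to 1$), and $\big(\sum_{j}|e^{-\beta h_e(j)}|^2\big)^{1/2}$ for each energy vertex ($1\to 0$). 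Since Theorem~\ref{thm:main} gives $\Delta=\prod_{v}\norm{\CO_v}$, I only need to recompute, modification by modification, how the affected vertices' norms change, then reassemble the product and multiply by the overall factor $q$ coming from the $q^{-1}Z_G$ normalization.

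First I would handle the cycle edges. Absorbing each cycle edge's energy vertex into its cycle vertex is a partial contraction along the single shared edge, so the value of the network is unchanged; the combined vertex is now swallowed $n\to n$, its $n$ inputs being the edges to the tree vertices and its $n$ outputs the edges to the mid vertices. The key observation is that, after the pair-equality constraints force each output label to equal the corresponding input label and the cycle-consistency constraint fixes the cycle-edge value $\delta_e$ as a function of the inputs, the combined tensor acts as the \emph{diagonal} map $\ket{\delta_1,\ldots,\delta_n}\mapsto e^{-\beta h_e(\delta_e)}\ket{\delta_1,\ldots,\delta_n}$ on all of $H^{\otimes n}$. The operator norm of a diagonal map is the largest modulus among its entries, and since $\delta_e$ ranges over all of $\{0,\ldots,q-1\}$ as the inputs vary, this norm is $\max_j|e^{-\beta h_e(j)}|$ for each $e\in E_{\text{cycle}}$, replacing the old contribution $1\cdot\big(\sum_j|e^{-\beta h_e(j)}|^2\big)^{1/2}$.

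Next I would handle the tree edges. The modification splits the weight of each tree energy vertex equally between that energy vertex and its tree vertex, so both carry weight $\sqrt{e^{-\beta h_e(j)}}$ when their edges are labeled $j$; because $\sqrt{z}\,\sqrt{z}=z$ for any fixed branch, the product of the two weights is unchanged and the network still evaluates $q^{-1}Z_G$. The tree vertex is still swallowed $0\to n$, now mapping the trivial input to $\sum_j \sqrt{e^{-\beta h_e(j)}}\,\ket{j}^{\otimes n}$, a vector of norm $\big(\sum_j|e^{-\beta h_e(j)}|\big)^{1/2}$; the tree energy vertex is still swallowed $1\to 0$, acting as the covector $\sum_j \sqrt{e^{-\beta h_e(j)}}\,\bra{j}$, whose norm is again $\big(\sum_j|e^{-\beta h_e(j)}|\big)^{1/2}$. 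Their product is therefore $\sum_j|e^{-\beta h_e(j)}|$ per $e\in E_{\text{tr}}$, while the mid vertices are untouched and keep norm $1$.

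Finally I would reassemble the scale: multiplying $\max_j|e^{-\beta h_e(j)}|$ over $E_{\text{cycle}}$, $\sum_j|e^{-\beta h_e(j)}|$ over $E_{\text{tr}}$, the mid-vertex norms $1$, and the overall factor $q$ gives exactly the scale of \Eq{eq:delta-scale2}, at which point Theorem~\ref{thm:main} supplies the efficient quantum algorithm. To keep the per-vertex cost polynomial I would invoke the degree reduction illustrated in \Fig{fig:deg-reduction}, which applies to the identity and cycle vertices and lets us assume a bounded degree $d$ (hence $\poly(q^d)$ cost per swallow). The main obstacle I anticipate is the bookkeeping in the cycle-vertex step: one must verify carefully that, once pair-equality fixes the outputs and consistency fixes $\delta_e$, the combined tensor is genuinely diagonal on the full $q^n$-dimensional input space, so that its norm is a true maximum rather than a sum over $j$; by contrast the tree-edge step reduces to the two elementary one-dimensional norm computations above.
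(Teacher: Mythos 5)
Your proposal is correct and follows essentially the same route as the paper: starting from the network and bubbling of Corollary~\ref{col:delta0}, absorbing each cycle edge's energy vertex into its cycle vertex (yielding a diagonal $n\to n$ operator of norm $\max_j|e^{-\beta h_e(j)}|$) and splitting each tree edge's weight as $\sqrt{e^{-\beta h_e(j)}}$ between the tree and energy vertices (yielding a combined contribution $\sum_j|e^{-\beta h_e(j)}|$), then applying Theorem~\ref{thm:main}. Your explicit norm computations for the $0\to n$ and $1\to 0$ swallowings are exactly the calculations the paper leaves implicit.
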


Notice that unlike the previous approximation scale, this scale
depends on the spanning tree. It is always smaller than or
equal to the scale in \Eq{eq:delta-scale}.  We do not know of a
complementary hardness result, hence we cannot generally assess the
quality of the approximation.  However, the classical simulation
results of \cite{ref:Nes09} that apply to the approximation scale
describe in \Eq{eq:delta-scale}, cannot readily be applied to
achieve the approximation scale in Corollary \ref{col:delta}
\cite{ref:Comm09}.

Just as in the general case, we are not limited to physical
energies, and the functions $h_e(\delta)$ and $\beta$ can be
complex.  Similarly, we have not restricted the shape of the
original graph $G$ because the resulting high-degree vertices in
$G_\delta$ are always reducible: they are either associated with the
identity tensors, which, as explained in the first remark in page
\pageref{pg:deg-reduction}, are reducible, or they are cycle
vertices, which are also reducible by a similar argument.

It is natural to wonder whether the approximation scale given by
Corollary \ref{col:delta} yields better approximations than
specializing the more general Corollary \ref{col:q-state} to
difference models.  This will depend on the specific model and
parameters being considered.  For all choice of parameters we have
the following inequality: 
\begin{align*}
  \max_j |e^{-\beta h_e (j)}| \leq
    \Big\|\left( e^{-\beta h_e((i-j) \! \! \mod q)}
      \right)_{i,j}\Big\|
    \leq \sum_{j=0}^{q-1}|e^{-\beta h_e(j)}| \ .
\end{align*}
Since there exist parameters to achieve both extremes in this
inequality, and there are graphs for which $b=2$ in Corollary
\ref{col:q-state}, it follows that there exist certain models for
which the approximation scale in Corollary \ref{col:delta} is better
and others for which Corollary \ref{col:q-state} is better.

We conclude this section by applying Corollary \ref{col:delta} to
the special case of the $q$-state Potts model with an homogeneous
coupling $J$. In that case, $e^{-\beta h_e(j)}=e^{\beta J}$ for
$j=0$, and $e^{-\beta h_e(j)}=1$ for $0<j<q$. Therefore the spanning
tree dependence disappears, and we obtain
\begin{corollary}
  \label{col:Potts} 
  There exists an efficient quantum algorithm that gives an additive
  approximation of the partition function $Z_G(\beta)$ of the
  homogeneous $q$-state Potts model that is defined on an arbitrary
  graph $G=(V,E)$ with inverse temperature $\beta>0$ and a coupling
  constant $J$. The approximation scale is given by
  \begin{align}
    \Delta = \left\{
      \begin{array}{lcl}
        q\left(q-1+e^{\beta J}\right)^{|V|-1} 
          \left(e^{\beta J}\right)^{|E|-|V|+1}&,&
          \text{Ferromagnetic case ($J>0$)} \\ \\
        q\left(q-1+e^{\beta J}\right)^{|V|-1} &,&
          \text{Antiferromagnetic case ($J<0$)}
      \end{array} \right. \ .
  \end{align}
\end{corollary}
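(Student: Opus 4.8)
The plan is to substitute the homogeneous Potts coupling into the general difference-model scale of Corollary \ref{col:delta} and evaluate the two products explicitly. The first step is to rewrite the Potts interaction $h(\sigma_i,\sigma_j)=-J\delta_{\sigma_i,\sigma_j}$ as a difference model: setting $\delta\EqDef(\sigma_i-\sigma_j)\bmod q$, we have $\sigma_i=\sigma_j$ precisely when $\delta=0$, so that $h_e(\delta)=-J$ for $\delta=0$ and $h_e(\delta)=0$ for $\delta=1,\ldots,q-1$. Hence $e^{-\beta h_e(0)}=e^{\beta J}$ and $e^{-\beta h_e(j)}=1$ for $0<j<q$. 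Crucially, these values are identical for every edge, which is exactly why the spanning-tree dependence of \Eq{eq:delta-scale2} will collapse.

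Next I would evaluate the tree-edge factor. Since $\beta>0$ and $J$ is real, each term $|e^{-\beta h_e(j)}|$ is just $e^{\beta J}$ (for $j=0$) or $1$ (for the $q-1$ remaining values of $j$), so that $\sum_{j=0}^{q-1}|e^{-\beta h_e(j)}|=e^{\beta J}+(q-1)=q-1+e^{\beta J}$. There are exactly $|E_{\mathrm{tr}}|=|V|-1$ tree edges, giving the common factor $(q-1+e^{\beta J})^{|V|-1}$ that appears in both cases of the statement.

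The only point requiring a (small) case distinction is the cycle-edge factor $\max_j|e^{-\beta h_e(j)}|=\max\{e^{\beta J},1\}$, where the sign of $J$ enters. For the ferromagnetic case $J>0$ we have $e^{\beta J}>1$, so the maximum is $e^{\beta J}$ and, with $|E_{\mathrm{cycle}}|=|E|-|V|+1$ cycle edges, the factor is $(e^{\beta J})^{|E|-|V|+1}$; for the antiferromagnetic case $J<0$ we have $e^{\beta J}<1$, so the maximum is $1$ and the cycle factor is trivial. Multiplying in the overall prefactor $q$ from Corollary \ref{col:delta} then yields exactly the two claimed formulas. I do not expect a genuine obstacle here—the result is a direct specialization of Corollary \ref{col:delta}—so the only things to be careful about are correctly identifying $\max\{e^{\beta J},1\}$ from the sign of $J$ (using $\beta>0$), and noting that it is precisely the homogeneity of the coupling that removes all dependence on the choice of spanning tree.
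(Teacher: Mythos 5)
Your proposal is correct and follows exactly the paper's route: the paper likewise obtains this corollary by substituting $e^{-\beta h_e(0)}=e^{\beta J}$ and $e^{-\beta h_e(j)}=1$ for $0<j<q$ into the scale of Corollary~\ref{col:delta}, noting that homogeneity removes the spanning-tree dependence and that the sign of $J$ determines $\max\{e^{\beta J},1\}$ on the cycle edges.
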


It is interesting to compare the above results to a classical result
that is given in Proposition 5.2 of \Ref{ref:Bor05}. There, the
authors assert that there exists straightforward classical sampling
algorithm that provides an additive approximation for the Tutte
polynomial $T_G(x,y)$ of a connected graph $G$ for $x>1, y>1$ with
an approximation scale $y^{|E|}(x-1)^{|V|-1}$. However, for such
graphs, $T_G(x,y)=(x-1)(y-1)^{|V|}Z_G(\beta)$, where $Z_G(\beta)$ is
the partition function of the homogeneous Potts model with
$q=(x-1)(y-1)$ and $y=e^{\beta J}$ \cite{ref:Jae90, ref:Sok05}. Therefore their
classical algorithm provides an additive approximation for the
ferromagnetic case with an approximation scale
$\Delta'=q^{|V|}\big(e^{\beta J}\big)^{|E|}$, and we obtain
\begin{align}
  \frac{\Delta}{\Delta'} 
    = \left(\frac{q+e^{\beta J}-1}{q e^{\beta J}}\right)^{|V|} \ .
\end{align}
This ratio is exponentially small as long as $q>1$ and $\beta J >
0$, and thus it may be seen as an indication
for the non-triviality of our approximation.
On the other hand, this classical result is in many cases better
than the quantum results of Corollary~\ref{col:q-state} and
Corollary~\ref{col:delta} for the homogeneous ferromagnetic Potts
case, which questions their non-triviality in the other cases, and
emphasizes the crucial role of the bubbling.

Finally, \Ref{ref:Bor05} also presents a simple additive
approximation for the chromatic polynomial $P_G(q)$ that counts the
number of legal $q$-colorings of the graph $G$. They show that there
exists a classical additive approximation for $P_G(q)$ for integer
$q$'s, whose approximation scale is $(q-1)^{|V|}$. Moreover, this
scale is tight, in the sense that for any $0<\delta<q-1$, an
additive approximation with scale $(q-1-\delta)^{|V|}$ is \NP-hard.

It is easy to verify that the chromatic polynomial is obtained from
the antiferromagnetic partition function of the homogeneous Potts
model for $e^{\beta J}\to 0$ (i.e., $J<0$ and $\beta\to +\infty$).
Not surprisingly, in such case the approximation scale of \Col{col:Potts} is
equivalent to $(q-1)^{|V|}$ -- the classical result.

\section{Tensor networks and the Jones and Tutte polynomials.}
\label{sec:top}

Recently, efficient quantum algorithms have been given for
additively approximating certain topological/combinatorial
quantities: the Jones polynomial of braids at roots of unity
\cite{ref:Fre02a, ref:Fre02b, ref:Fre02c, ref:Aha06b} and the Tutte
polynomial of planar graphs \cite{ref:Aha07}. Broadly speaking, both
results can be viewed in three steps:
\begin{enumerate}
  \item The problem is mapped into a combinatorial calculation within 
    the Temperley-Lieb algebra.
  \item Representation theory of the Temperley-Lieb algebra is used 
    to translate the combinatorial problem into a linear-algebra problem.
  \item A quantum algorithm is given for approximating the  solution 
    to the linear-algebra problem.
\end{enumerate}

This final step can be seen as the approximation of a particular
tensor network.  Without going into the details, the rough
description of the tensor network for the two problems is as
follows:
\begin{itemize}
  \item \textbf{The Jones Polynomial of a Braid.}  Here the tensor 
    network is derived from the braid by closing up the loose
    strands of the braid and then replacing every crossing by a
    vertex corresponding to a rank 4 tensor, and inserting a vertex
    at any local maximum or minimum of the strands.

  \item \textbf{The Tutte Polynomial of a planar Graph.}  Here, the 
    original graph $G$ is replaced by a so called \emph{medial graph},
    which features a rank-four tensor at the center of every edge of
    $G$.  Again, as in the Jones Polynomial case, rank two tensors
    are inserted at any local maximum or minimum.
\end{itemize}

Even with this rough description, an intuitive understanding of the
nature of the errors given in these works can be obtained.  In the
Jones Polynomial case, the parameter being a root of unity ensures
that the rank 4 tensors can be swallowed $2 \rightarrow 2$ such that
the swallowing is a unitary operator and hence does not effect the
scale $\Delta$.  What remains is the cost of swallowing the maximum
and minimum tensors in a $0 \rightarrow 2$ or $2 \rightarrow 0$
fashion, each of which contributes a factor of $\sqrt{q}$.  In the
Tutte Polynomial case, the contributions to $\Delta$ include the
previous cost of the rank 2 tensor swallowing, but in addition,
unlike the Jones Polynomial case, also include a cost for each $2
\rightarrow 2$ swallowing of the rank 4 tensors (in the language of
this paper, these quantities are the $||\rho (\mathcal{T}_i) ||$
terms). This is because in \Ref{ref:Aha07}, the crossing operators
are not necessary unitary.

We have previously discussed the need to carefully examine the
nature of the additive error that \Thm{thm:main} provides. In the
context of these problems, the non-trivial nature of the
approximation has been established by showing that for certain sets
of parameters, the level of approximation provided by both
algorithms has been shown to be a complete problem for quantum
computation \cite{ref:Fre02a, ref:Yar06, ref:Aha06a, ref:Aha07}.

\section{Conclusions and open questions}
\label{sec:summary}

We have given a quantum algorithm that additively approximates the
value of a tensor network to a certain scale.  As an application of
the algorithm, we have obtained new quantum algorithms that
approximate the partition functions of certain statistical
mechanical models including the Potts model.

The fact that the approximation is additive and depends on the
approximation scale is by no means a minor point: for a given
algorithm, with large enough approximation error, the approximation
is useless and the algorithms are trivial, or at least can be
matched classically.  We have shown that in some cases, the
approximation scale of the algorithm is \BQP-hard, and therefore,
some instances of the problem are highly nontrivial.  We consider
this to be an important but indirect verification that the
approximation scale is non-trivial.  What is missing is a direct
argument: an argument that would say the approximation scale is good
enough to answer a question directly connected to the quantities
being estimated (i.e. topological invariants, statistical mechanical
models).  Such an argument would represent a significant advance.

Our intuition is that the tensor network point of view should be
helpful for the design of new quantum algorithms in the future. This
is motivated by the fact that from the tensor network viewpoint two
core features of quantum circuits, the unitarity of the gates, and
the notion of time (i.e. that the gates have to be applied in a
particular sequence), are replaced by more flexible features.  The
unitary gate is replaced by an arbitrary linear map encoded in each
tensor and the notion of time is replaced by the geometry of the
underlying graph of the tensor network along with a choice of
bubbling of the network.  Hence, the design of algorithms from the
tensor-network point of view requires two things: a tensor network
whose value is the quantity of interest, and a specification of a
bubbling order of the vertices of the underlying graph.  It often
seems that for a specific problem there are several somewhat natural
tensor networks with the right value to choose from and that the
approximation scale can vary quite dramatically between the choices
(as was the case in the difference statistical mechanical models).
Additionally, for a given tensor network, the choice of ordering can
make a significant difference as well.  The analysis of these issues
have a combinatorial and graph theoretic flavor.  It would be
interesting to understand the computational complexity of finding an
optimal or even a reasonably good choice of bubbling for a given
tensor network.

It is also intersting to see if the tensor-network framework can
help to understanding other models of quantum computation. For
example, in the one clean qubit model of quantum computation, we
have a universal quantum computer that is allowed to operate only on
one clean qubit (initialized to $\ket{0}$), while the rest of qubits
are in a completely mixed state $\rho=\Id$ \cite{ref:Kni98}. The
result of such a computation can be seen as the trace of a product of 4
operators: a quantum circuit $U$ times local projection $Q$, times
$U^\dagger$ and another local projection $P$. In the tensor-network
setting, the trace of these 4 operators will look like edges
connecting one part of the chain to its other part. It is therefore,
not surprising that the estimation of the Jones polynomial of the
trace closure of a braid is known to be in this model (in fact,
recently Shor \& Jordan have shown that it is complete for this
model \cite{ref:Sho08}), because the trace closure of a braid, when
interpreted as a tensor network, translates to the (weighted) trace
operation in a quantum computation. However, as the same
tensor-network can be graphically presented and bubbled in many
different ways, it might be hard to identify the trace operation
that hides in a particular layout of the network. It
would therefore be interesting to see if there exists a more natural
way to characterize these networks, using, perhaps, some property of
the network that is invariant to the way in which it is presented.
Such characterization might shed some more light on this interesting
complexity class.

It is also interesting to consider the idea of additive
approximations from a complexity theory point of view. In a recent
paper, Goldberg \& Jerrum studied the complexity of a multiplicative
approximation (FPRAS) for the Tutte polynomial \cite{ref:Les07}.
They map about three quarters of the Tutte plane, distinguishing
between points where there is an FPRAS and points where an FPRAS is
NP-hard. It would be interesting to do the same with respect to
additive approximations. In light of the recent quantum algorithms
for the Jones and Tutte polynomials, as well the results of this
paper, it seems that additive approximations are a natural framework
for quantum algorithms. We therefore hope that quantum hardness
results can be used to map regions in the Tutte plane which are
inaccessible to classical additive approximations with certain
approximation scales (unless $\BPP=\BQP$). It is also interesting to
understand the relationship between such points and other points and
approximation scales where an additive approximation is $\NP$-hard.
The first few steps in that direction were taken by \Ref{ref:Bor05},
and we hope that the algorithms and techniques of this paper can be
used to further advance these ideas.

Finally, we briefly mention two other directions of inquiry that
might be of interest.  The first is to see whether there is a
natural extension of our tensor-network definition of the $\BQP$
class to the $\QMA$ class (or more likely the $\QCMA$ class). Can
such a definition shed new light on these complexity classes? A
related problem is to find a $\QMA$-complete problem that is
\emph{naturally} cast in the language of tensor-networks.  

The second direction is to understand the structure of universal
sets of tensors, i.e. sets of elementary tensors that can be
efficiently contracted to approximate any other
tensor. So far, such sets were found solely using techniques from
quantum computations: one begins with a set of transformations that
form a dense subgroup $SU(N)$ or $SL(N)$, and then proves
universality using the (either unitary or non-unitary - see
\Ref{ref:Aha07}) Solovay-Kitaev theorem. The set of universal
transformations yields a set of universal tensors.  It is therefore
interesting to see if there exist other, perhaps more direct,
techniques to prove such universality, techniques that do not rely
on heavy machinery from the theory of Lie groups.

\section{Acknowledgments}
\label{sec:Acknowledgements}

We would like to thank Maarten van den Nest, Wolfgang D\"{u}r and
Hans Briegel for pointing our attention to an implicit quantum
algorithm for the evaluation of the partition function of the Potts
model using the framework of graph states and one-way quantum
computation. Their observation paved the way for the difference algorithms in
\Sec{sec:difference}. We would also like to thank Dorit
Aharonov for useful discussions and suggestions.

\bibliographystyle{hep}
\bibliography{tnet}

\appendix

\section{The Hadamard test}
\label{sec:Hadamard}

The Hadamard test is a simple and well-known quantum algorithm that
approximates the inner product $\bra{\alpha}U\ket{\alpha}$ for a
normalized state $\ket{\alpha}$ and a unitary operator $U$ that can
be efficiently generated. As it is an important part of our main
result \Thm{thm:main}, we include it here for sake of
completeness. 

\begin{theorem}[The Hadamard test]
\label{thm:Hadamard}
  Let $\ket{\alpha}$ be a normalized state that can be efficiently
  generated (e.g., a tensor product $\ket{0}^{\otimes n}$), and let
  $U$ be a unitary operator that can be implemented on a quantum
  computer in time $T$. Then there exists a quantum algorithm 
  that for every $\epsilon>0$ outputs a complex number $r$ such that
  \begin{align} \label{e:hadtest}
    \Pr\Big( |\bra{\alpha}U\ket{\alpha} - r| \ge \epsilon\Big) \le
    1/4 \ ,
  \end{align}
  and the running time of the algorithm is $\CO(\epsilon^{-2} T)$.
\end{theorem}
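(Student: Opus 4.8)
The plan is to run the standard single-ancilla Hadamard-test circuit $\CO(\epsilon^{-2})$ times and average the measurement outcomes to estimate separately the real and imaginary parts of $\bra{\alpha}U\ket{\alpha}$. First I would set up the circuit. Adjoin one ancilla qubit initialized to $\ket{0}$ and prepare the main register in $\ket{\alpha}$ (efficient by hypothesis). Apply a Hadamard to the ancilla, then the controlled-$U$ operation (apply $U$ to the main register conditioned on the ancilla being $\ket{1}$), then a second Hadamard to the ancilla, and measure it in the computational basis. Tracking the amplitudes through these three steps is an elementary calculation whose only non-trivial ingredient is $\inprod{\alpha}{\alpha}=1$; it yields
\begin{align}
  \Pr[\text{ancilla}=0] = \tfrac{1}{2}\big(1 + \mathrm{Re}\,\bra{\alpha}U\ket{\alpha}\big) \ .
\end{align}

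Second, I would turn this into an estimator. Let $X$ equal $+1$ on outcome $0$ and $-1$ on outcome $1$; then $\mathbb{E}[X] = \mathrm{Re}\,\bra{\alpha}U\ket{\alpha}$ and $X\in\{-1,+1\}$. Running $N$ independent copies and averaging gives $\bar X$, and since $X$ is bounded, Hoeffding's inequality gives $\Pr[\,|\bar X - \mathbb{E}[X]| \ge \epsilon/2\,] \le 2e^{-N\epsilon^2/8}$, so that $N=\CO(\epsilon^{-2})$ suffices to push this probability below $1/8$. To capture the imaginary part I would insert a single phase gate $S=\mathrm{diag}(1,i)$ on the ancilla between the controlled-$U$ and the second Hadamard; the identical computation then produces $\Pr[\text{ancilla}=0]=\tfrac12\big(1 \pm \mathrm{Im}\,\bra{\alpha}U\ket{\alpha}\big)$, giving an analogous $\pm1$-valued estimator for $\mathrm{Im}\,\bra{\alpha}U\ket{\alpha}$ with the same sample complexity.

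Third, I would combine the two estimates. Output $r = \bar X_{\mathrm{re}} + i\,\bar X_{\mathrm{im}}$. By a union bound over the two sub-experiments, with probability at least $1-\tfrac18-\tfrac18 = \tfrac34$ both the real and imaginary estimates land within $\epsilon/2$ of their targets, whence $|r-\bra{\alpha}U\ket{\alpha}| \le \sqrt{2}\cdot\epsilon/2 \le \epsilon$, which is exactly the claimed bound. For the running-time count, controlled-$U$ is implemented by controlling each of the gates in the given circuit for $U$, which costs only a constant-factor overhead, so one run of either circuit takes time $\CO(T)$, and the total over $\CO(\epsilon^{-2})$ runs is $\CO(\epsilon^{-2}T)$.

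I expect no deep obstacle here, since the Hadamard test is a standard primitive; the only points requiring care are the concentration bookkeeping (fixing the constant in $N=\CO(\epsilon^{-2})$ and splitting the allowed failure probability $1/4$ between the real and imaginary runs) and the observation that a controlled version of $U$ can be assembled from the circuit for $U$ without harming the time bound.
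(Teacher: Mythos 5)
Your proposal is correct and follows essentially the same route as the paper: the single-ancilla Hadamard-test circuit, a $\pm 1$-valued estimator for the real part, a phase-shifted variant for the imaginary part, and a Chernoff--Hoeffding bound with $N=\CO(\epsilon^{-2})$ repetitions. Your treatment of the final error combination (splitting the failure probability between the two sub-experiments and the $\sqrt{2}\cdot\epsilon/2\le\epsilon$ step) is in fact slightly more careful than the paper's.
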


\begin{proof}
  We add an ancillary  qubit to the system and initialize it in the state
  $\ket{\psi_0}=\ket{\alpha}\ket{0}$. Acting with the Hadamard gate
  $H=\frac{1}{\sqrt{2}}\bigl( \begin{smallmatrix} 1&-1\\ 1&1
  \end{smallmatrix} \bigr)$  on the ancillary qubit, we get
  $\ket{\psi_1} = \frac{1}{\sqrt{2}}\ket{\alpha}\otimes(\ket{0} +
  \ket{1})$. The next step is to act with $U$ on the registers of
  $\alpha$ \emph{conditioned} on the anciallary qubit. The result is
  $\ket{\psi_2} = \frac{1}{\sqrt{2}}\big(\ket{\alpha}\otimes\ket{0} +
  (U\ket{\alpha})\otimes\ket{1} \big)$. Finally we act again with the
  Hadamard gate on the ancillary qubit and obtain
  \begin{align}
    \ket{\psi_3} = \frac{1}{2}\Big[ 
      \ket{\alpha}\otimes\ket{0} + \ket{\alpha}\otimes\ket{1} 
      + (U\ket{\alpha})\otimes\ket{1} - (U\ket{\alpha})\otimes\ket{0}
      \Big] \ .
  \end{align}
  We measure the ancillary qubit and output the number $1$ for $\ket{1}$ and
  $-1$ for $\ket{0}$. We repeat this process $N$ times and store the
  results in the variables $x_1, \ldots, x_N$. These are independent
  identically distributed random variables with an average
  $E(x_i)=\text{Re}\bra{\alpha}U\ket{\alpha}$ since
 $\Pr(x_i=1) =
  \frac{1}{4}[2+2\text{Re}\bra{\alpha}U\ket{\alpha}]$ and $\Pr(x_i=-1) =
  \frac{1}{4}[2-2\text{Re}\bra{\alpha}U\ket{\alpha}]$. We can
  therefore use the Chernoff-Hoeffding bound and obtain
  \begin{align}
    \Pr\Big[ \big|\frac{1}{N}\sum_{i=1}^N x_i -
    \text{Re}\bra{\alpha}U\ket{\alpha}\big| \ge \epsilon\Big] \le
    2e^{-2N\epsilon^2} \ .
  \end{align}
  Thus taking $N=\CO(\epsilon^{-2})$, we obtain the right approximation
  for $\text{Re}\bra{\alpha}U\ket{\alpha}$. 

  To approximate the imaginary part, we change the first step such
  that $\ket{\psi_1} = \frac{1}{\sqrt{2}}\ket{\alpha}\otimes(\ket{0} -
  i\ket{1})$, and proceed in the same way. All in all, the entire
  algorithm runs in $\CO(T\epsilon^{-2})$ quantum time.
  
  Notice that we can replace the $1/4$ factor in (\ref{e:hadtest}) by any constant
  $\delta>0$ and obtain a running time of
  $\CO(T\epsilon^{-2}\log\delta)$.
\end{proof}

\end{document}